\title{A Quantum Mechanical Bound for Space-Energy Cost with Respect to the Von Neumann Entropy}
\author{Christoph Haupt}
\renewcommand{\S}{\vNentropy}
\newcommand{\C}{\mathbb{C}}
\newcommand{\N}{\mathbb{N}}
\newcommand{\R}{\mathbb{R}}
\newcommand{\Cn}{\mathbb{C}^{n\times n}}
\newcommand{\1}{\mathbbm{1}}
\newcommand{\h}{\hbar}
\newcommand{\dx}{\,\mathrm{d}}
\newcommand{\VERT}[1]{{\left\vert\kern-0.25ex\left\vert\kern-0.25ex\left\vert #1 
    \right\vert\kern-0.25ex\right\vert\kern-0.25ex\right\vert}}
\newcommand{\tf}{\tilde{f}}
\newcommand{\eps}{\epsilon}
\newcommand{\psid}{\ket{\psi_{\vec{n}}}}
\newcommand{\ketbra}[1]{| #1 \rangle\langle #1 |}
\newcommand{\ketbras}[2]{| #1 \rangle\langle #2 |}
\newcommand{\limp}{\lim_{p\to0^+}}
\newcommand{\rhob}{\rho_{\beta_1,\beta_2}}
\newcommand{\ksumi}{\sum_{k=0}^\infty}
\newcommand{\ddt}{\frac{d}{dt}}
\newtheoremstyle{dotless}
	{10pt}
	{10pt}
	{\itshape}
	{}
	{\bfseries}
	{}
	{.5em}
	{}
\newtheoremstyle{dot}
	{10pt}
	{10pt}
	{\itshape}
	{}
	{\bfseries}
	{.}
	{.5em}
	{}
\newtheoremstyle{dotless_def}
	{10pt}
	{10pt}
	{}
	{}
	{\bfseries}
	{}
	{.5em}
	{}
\newtheoremstyle{dot_def}
	{10pt}
	{10pt}
	{}
	{}
	{\bfseries}
	{.}
	{.5em}
	{}
\newtheoremstyle{todo_note}
	{9pt}
	{9pt}
	{\color{red}}
	{}
	{\bfseries \color{red}}
	{:}
	{.5em}
	{}
\theoremstyle{dot}
	\newtheorem{theorem}{Theorem}
	\newtheorem{lemma}[theorem]{Lemma}
\theoremstyle{dot_def}
	\newtheorem*{definition*}{Definition}
	\newtheorem*{bemerkung*}{Bemerkung}
	\newtheorem*{remark*}{Remark}
\theoremstyle{todo_note}
	\newtheorem{ToDo}{ToDo}
\DeclareMathOperator{\sgn}{sgn}
\DeclareMathOperator{\vNentropy}{S}
\DeclareMathOperator{\tr}{tr}
\DeclareMathOperator{\spa}{span}
\DeclareMathOperator{\var}{var}
\DeclareMathOperator{\supp}{supp}
\DeclareMathOperator{\inter}{int}
\begin{document}
\showboxdepth=\maxdimen
\showboxbreadth=\maxdimen
\pagestyle{empty}
\begin{titlepage}
\begin{center}
\includegraphics{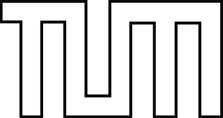}\\[3mm]
\sf
{\Large
  Technische Universit\"at M\"unchen\\[5mm]
  Department of Mathematics\\[8mm]
}
\normalsize
\includegraphics{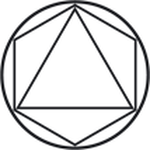}\\[15mm]

Master's Thesis\\[15mm]

{\Huge
A Quantum Mechanical Bound for Space-Energy Cost with Respect to the Von Neumann Entropy
}
\bigskip

\normalsize

Christoph Haupt
\end{center}
\vspace{75mm}

Supervisor: Prof. Dr. Michael M. Wolf 
\medskip

Advisor: Martin Idel
\medskip

Submission Date: 15.09.2015

\end{titlepage}
\tableofcontents
\newpage
\pagenumbering{arabic}
\pagestyle{headings}
\section{Introduction}
This thesis discusses the possibility of uncertainty relations for space and energy given a state of fixed entropy. In particular, it discusses the results in \cite{dam-nguyen}. There, the authors propose a lower bound for the mixed cost in energy and space required for physically storing information in a quantum mechanical system.

We first critically examine the justifications for the bound given in the paper.
This is done from a mathematical point of view, in contrast to the more physically motivated original paper.
Then we give two examples that illustrate the limitations of this inequality.

In section \ref{sec_GVP} we prove the variational principle for Gibbs states, which is a central theorem in this subject.

Using this, we present numerical results to find an alternative energy-space bound.
We do this in the finite dimensional version of this problem.
They indicate a slightly different version of the inequality.

In the end we describe a promising ansatz to find a lower energy-space bound that depends on the amount stored information.
Unfortunately that did not result in a satisfactory result.
\section{The Energy Surface Bound}\label{sec_bound}

In this chapter we examine the main ideas of \cite{dam-nguyen}.
After giving the basic definitions and introducing the physical background, we introduce the formula and give  a sketch of its proof.
In the end we present a collection of examples that illustrate the limitations for the inequality.

\subsection{Preliminaries}\label{sec_prel}

In \cite{dam-nguyen} the authors propose a lower bound for the mixed cost in energy and space required for physically storing information.
This is done in a non-relativistic quantum mechanical setting.
Therefore the storing device is given by its Hamiltonian $H$, which is acting on a subset of a separable, infinite dimensional Hilbert space $\mathcal{H}$.
We can assume without loss of generality that $\mathcal{H}=\mathcal{L}_2(\R^d)$.
$d$ is called the number of degrees of freedom of the system, but sometimes referred to as dimension as well.
We only consider Hamiltonians of the form 
\[
H=\frac{1}{2m}P^2+V(x),
\]
with the momentum operator $P:D(P)\to\mathcal{L}_2(\R^d)$, $D(P):=\{\psi\in \mathcal{L}_2(\R^d)|P(\psi)\in \mathcal{L}_2(\R^d)\}$,
\[
P(\psi)(x)=-i\hbar\sum_{i=1}^d\frac{\partial}{\partial x_i} \psi(x).
\]
$m$ is the mass of the system and $\h$ the reduced Planck constant. With this restriction we can model one or more particles in a device, but we cannot account for interaction between the particles.
It is physically reasonable to assume that $H$ has finite ground state energy, i.e. the spectrum is bounded from below. We even assume that the ground state energy is non-negative.
The set of bounded operators on $\mathcal{H}$ is denoted by $\mathcal{B(H)}$.
A particle (or multiple particles) in this device is described by a density matrix $\rho\in\mathcal{B(H)}$.
A density matrix or state is a positive trace-class operator with trace one, i.e. $\rho\geq0$, $\rho\in \mathcal{S}_1(\mathcal{H})$, $\tr(\rho)=1$.
We denote the set of all states with $\mathcal{S(H)}$.
To physically store information in this setting one has to prepare an energy eigenstate $\ketbra{\psi_i}$, where $\ket{\psi_i}$ is an eigenvector of $H$.
The quantity information is given by a probability distribution $(\lambda_i)_i\in l^1(\R)$, and the storage of this information in the device then corresponds to the preparation of the state
\[
\sum_i \lambda_i\ketbra{\psi_i}.
\]
We use the Von Neumann entropy $\S(\rho)$ as measure for the amount of information that is stored in a state $\rho$.
The Von Neumann entropy is the quantum mechanical analogue of the Shannon entropy. For more details on the Shannon entropy please consult \cite{tom-cover}.
The Von Neumann entropy is defined as $\S:\mathcal{S(H)}\to\R^+_0$,
\[
\S(\rho)=-\tr(\rho\log(\rho)).
\]
See lemma \ref{vnEntropy} for detailed information on the definition.
The cost of energy and space of a state $\rho$ are $\tr(\rho H)$ and 
\[
\var_\rho(Q)=\tr(\rho Q^2)-\tr(\rho Q)^2,
\]
respectively.
The one-dimensional position operators are defined by $Q_i:D(Q_i)\to L^2(\R^d)$,  $D(Q_i):=\{\psi\in \mathcal{L}_2(\R^d)|Q_i\psi\in \mathcal{L}_2(\R^d)\}$, $i\in\{1,\dotsc,d\}$,
\[
Q_i(\psi)(x)=x_i\psi(x)
\]
and the position operator in $d$ dimensions is $Q:D(Q)\to L^2(\R^d)$,  $D(Q):=\{\psi\in \mathcal{L}_2(\R^d)|Q\psi\in \mathcal{L}_2(\R^d)\}$,
\[
Q(\psi)(x)=\sum_{i=1}^d Q_i\psi(x).
\]

\subsection{The Statement}
The central statement in \cite{dam-nguyen} is an inequality similar to Heisenberg's famous uncertainty relation.
Here, the product of the variance of the observable position and the energy cost are lower bounded by a function of the entropy. The inequality in question is
\begin{equation}\label{dn_bound}
\tr\left(\rho H\right)  \tr(\rho Q^2) \geq \frac{\h^2d^2}{2m}(\exp(\S(\rho)/d)-1)^2
\end{equation}
for all $\rho\in\mathcal{S(H)}$ and all Hamiltonians $H$. $m$ is usually the mass of the particle(s) in the system.

We only consider Hamiltonians with ground state energy 0.
Thereby we prevent physically meaningless shifts in the Hamiltonian, while every Hamiltonian $H$ with positive ground state energy $\lambda_{min}$ can be normalized via $\tilde{H}=H-\lambda_{min} \1$, reducing its costs in the process.

The inequality \eqref{dn_bound} contains $\tr(\rho Q^2)$ instead of the actual spacial cost $\var_\rho(Q)$.
Although $\tr(\rho Q^2)$ is strictly larger for some $\rho$, we show in lemma \ref{lem_varq} that the two expressions are interchangeable in this context and result in the same statement.

\begin{lemma}\label{lem_varq}
Let $S>0$. Then we have
\begin{equation}\label{eq_cost_min}
\min_{\rho,H}\tr(\rho H)\tr(\rho Q^2) =\min_{\rho,H}\tr(\rho H)\var(\rho Q),
\end{equation}
where we optimize both times over all states $\rho$ which are diagonal in the energy eigenbasis of $H$ and have $\S(\rho)=S$, and all normalized Hamiltonians $H$.
\end{lemma}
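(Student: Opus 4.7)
My plan is to prove the two inequalities separately. The direction $\min_{\text{LHS}} \geq \min_{\text{RHS}}$ is immediate from
\[
\tr(\rho Q^2) = \var_\rho(Q) + \tr(\rho Q)^2 \geq \var_\rho(Q),
\]
which holds pointwise on the common feasible set, so every admissible pair contributes an LHS objective at least as large as its RHS objective.

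For the converse, I would show that any pair $(\rho, H)$ admissible for the RHS can be converted, via a position-space translation, into an admissible pair $(\tilde\rho, \tilde H)$ with the \emph{same} product value. The key idea is to re-centre the state so that $\tr(\tilde\rho Q) = 0$; with the position expectation vanishing, the two cost expressions $\tr(\tilde\rho Q^2)$ and $\var_{\tilde\rho}(Q)$ coincide. Concretely, for $b \in \R^d$ let $T_b \psi(x) := \psi(x - b)$ denote the unitary position shift, and set $\tilde\rho := T_b \rho T_b^\dagger$, $\tilde H := T_b H T_b^\dagger$. Since $T_b$ commutes with the momentum operators, $\tilde H = \tfrac{1}{2m} P^2 + V(Q - b)$ is still a Hamiltonian of the allowed form, with the same spectrum as $H$ (so its ground state energy remains zero and normalization is preserved). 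The eigenvectors of $\tilde H$ are $T_b$ applied to those of $H$, so $\tilde\rho$ is still diagonal in the new energy eigenbasis with the same spectrum, hence $\S(\tilde\rho) = S$.

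I would then pick $b$ so that $\tr(\tilde\rho Q) = 0$. Since translation shifts the expectation of $Q$ by a deterministic amount linear in $b$, this is always solvable as long as $\tr(\rho Q)$ is finite, which follows from $|\tr(\rho Q)| \leq \tr(\rho Q^2)^{1/2}$ (Cauchy--Schwarz) whenever the RHS objective is finite. Unitary invariance of the trace gives $\tr(\tilde\rho \tilde H) = \tr(\rho H)$, and translation invariance of the variance combined with the centring gives $\tr(\tilde\rho Q^2) = \var_{\tilde\rho}(Q) = \var_\rho(Q)$. Multiplying these identities yields $\tr(\tilde\rho \tilde H)\,\tr(\tilde\rho Q^2) = \tr(\rho H)\,\var_\rho(Q)$, establishing $\min_{\text{LHS}} \leq \min_{\text{RHS}}$.

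I expect the main obstacle to be the somewhat finicky bookkeeping needed to confirm that conjugation by $T_b$ really preserves every structural requirement (Hamiltonian form, zero ground state energy, diagonality of the state in the energy eigenbasis) rather than any delicate estimate. Once those verifications are carried out, the rest reduces to unitary invariance of the trace and entropy together with translation invariance of the variance.
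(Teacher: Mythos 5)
Your proposal is correct and follows essentially the same route as the paper: the trivial pointwise inequality $\tr(\rho Q^2)\geq\var_\rho(Q)$ in one direction, and a position-space translation of both the state and the potential to re-centre $\tr(\rho Q)$ at zero in the other, using that the shift preserves the energy, the spectrum (hence the entropy and normalization), and the variance. The only differences are cosmetic and mildly in your favour: you phrase the shift as conjugation by the unitary $T_b$ in all $d$ coordinates at once and apply it to an arbitrary admissible pair, whereas the paper shifts coordinate by coordinate starting from an assumed minimizer.
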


\begin{proof}
Obviously we have $\geq$ in the proposed equality.
To show the reversed inequality, we fix $\rho$ and $H$ such that $\tr(\rho H)\tr(\rho Q^2)$ is minimal.
Such minimizing $\rho$ and $H$ exist, since they are taken from closed sets.
Let $(\psi_i)_i$ be the eigenbasis of $H$, $\rho=\sum_{i=0}^\infty\lambda_i\ketbra{\psi_i}$  and set $C:=\tr(\rho Q_k)$ for an arbitrary dimension $k\in\{1,\dotsc,d\}$.
We denote the $k$-th unit vector $e_k$ and define a shifted basis as
\[
\tilde{\psi}(x)=\psi(x+Ce_k),
\]
which is also a well defined orthonormal basis.
Analogously for $H=\frac{1}{2m}P^2+V$ we set the shifted Hamiltonian
\[
\tilde{H}=\frac{1}{2m}P^2+\tilde{V},
\]
where $\tilde{V}(x)=V(x+Ce_k)$. Then we have 
\begin{align*}
	\braket{\tilde{\psi}_i|\tilde{H}|\tilde{\psi}_i}
	&=\int_{\R^d} \frac{1}{2m}\overline{\tilde{\psi}_i(x)} P^2(\tilde{\psi}_i)(x) + \tilde{V}(x)|\tilde{\psi}_i(x)|^2 \dx x
	\\
	&=\int_{\R^d} \frac{1}{2m}\overline{{\psi}_i(x+Ce_k)} P^2({\psi}_i)(x+Ce_k) + {V}(x+Ce_k)|{\psi}_i(x+Ce_k)|^2 \dx x
	\\
	&=\int_{\R^d} \frac{1}{2m}\overline{{\psi}_i(x)} P^2({\psi}_i)(x) + {V}(x)|{\psi}_i(x)|^2 \dx x
	\\
	&=\braket{{\psi}_i|{H}|{\psi}_i}
\end{align*}
and
\begin{align*}
\braket{\tilde{\psi}_i|Q_k|\tilde{\psi}_i}&=
\int_{\R^d}  |\tilde{\psi}_i(x)|^2 x_k \dx x
\\
&=\int_{\R^d}  |{\psi}_i(x)|^2 (x_k-C) \dx x
\\
&=\braket{{\psi_i}|Q_k|{\psi_i}}-C.
\end{align*}
And we get for the position expectation value in dimension $k$ of $\tilde{\rho}$
\begin{align}\nonumber
\tr(\tilde{\rho}Q_k)&=\sum_{i=0}^\infty\lambda_i\braket{\tilde{\psi}_i|Q_k|\tilde{\psi}_i}
\\\nonumber
&=\sum_{i=0}^\infty\lambda_i\braket{{\psi}_i|Q_k|{\psi}_i}-\sum_{i=0}^\infty\lambda_i C
\\\nonumber
&=\tr({\rho}Q_k)-C
\\\label{eq_Q_k_0}
&=0
\end{align}
and the average energy
\begin{align*}
\tr(\tilde{\rho}\tilde{H})
&=\sum_{i=0}^\infty\lambda_i\braket{\tilde{\psi}_i|\tilde{H}|\tilde{\psi}_i}
\\
&=\tr({\rho}{H}).
\end{align*}
And we get for the expectation value of $Q_k^2$:
\begin{align*}
	\braket{\tilde{\psi}_i|Q_k^2|\tilde{\psi}_i}
	&=\int_{R^d}|\tilde{\psi}(x)|x_k^2\dx x
	\\
	&=\int_{R^d}|{\psi}(x)|(x_k^2-2Cx_k+C^2)\dx x
	\\
	&=\braket{{\psi}_i|Q_k^2|{\psi}_i}-C^2.
\end{align*}
We have already shown with \eqref{eq_Q_k_0} that $\tr(\tilde{\rho} Q_k^2)=\var_{\tilde{\rho}}(Q_k)$. Now we see that we haven't changed the variance at all:
\begin{align*}
	\tr(\tilde{\rho} Q_k^2)
	&=\sum_{i=0}^\infty\lambda_i\braket{\tilde{\psi}_i|Q_k^2|\tilde{\psi}_i}
	\\
	&=\sum_{i=0}^\infty\lambda_i(\braket{{\psi}_i|Q_k^2|{\psi}_i}-C^2)
	\\
	&=\var_\rho(Q_k).
\end{align*}
We can repeat this procedure for the new state $\tilde{\rho}$ and the new Hamiltonian $\tilde{H}$ for all dimensions $k\in\{1,\dotsc,d\}$. The resulting state $\tilde{\psi}$ in its Hamiltonian $\tilde{H}$ then has the same energy 
\[
\tr(\tilde{\rho}\tilde{H})=\tr({\rho}{H})
\]
and the same variance
\begin{align*}
\var_{\tilde{\rho}}(Q)&=\var_{{\rho}}(Q)
\\
&=\tr(\tilde{\rho} Q^2)
\end{align*}
and we have shown that the minimum of the left hand side in the proposed equation \eqref{eq_min_cost} is no less than the right hand side. Thus we have equality.
\end{proof}

\subsection{The Justification of the Bound}\label{ss_justific}

We will now summarize the arguments that led the authors of \cite{dam-nguyen} to the energy-surface bound \eqref{dn_bound}.
The authors first relax the condition that $\rho$ is diagonal in the eigenvectors of $H$ and fix the entropy $\S(\rho)=S>0$. Thus we prove \eqref{dn_bound} by calculating
\begin{equation}
\min_{\rho,H}\tr(\rho H)\tr(\rho Q^2) \label{eq_min_cost}
\end{equation}
over $\rho\in\mathcal{S(H)}$ with $\S(\rho)=S$ and all normalized Hamiltonians $H=\frac{1}{2m} P^2+V(x)$, with $V$ being an arbitrary potential in $\mathcal{H}$.

Now the authors make the pyhsically motivated conjecture that
\begin{equation} \label{eq_H_opt}
H_{opt}=\frac{1}{2m}P^2-\Big(1-\frac{d}{2}\Big)^2\frac{\hbar^2}{2m}Q^{-2}
\end{equation}
yields the optimal Hamiltonian in \eqref{eq_min_cost}, and we are left with the optimization problem
\[
\min_{\rho}\,\tr(\rho H_{opt}) \tr(\rho Q^2).
\]
This will be estimated using 
\begin{equation*}
\tilde{C}_\kappa:=\min_\rho\,\tr\rho\left(H_{opt}+\frac{\kappa}{2}r^2\right)
\end{equation*}
for $\kappa>0$.
By a Lemma, which we will look at later, we assume
\begin{equation}\label{eq_sum_cost}
\tilde{C}_\kappa\geq\hbar\sqrt{\frac{\kappa}{m}}d(\exp(S/d)-1).
\end{equation}
Plugging this in, we get
\begin{equation}
\tr \rho H_{opt}\geq\hbar\sqrt{\frac{\kappa}{m}}d(\exp(S/d)-1)-\frac{\kappa}{2}\tr\rho r^2
\end{equation}
for all $\rho\in\mathcal{S(H)}$, which yields
\begin{equation}
\tr \rho H_{opt}\tr\rho r^2\geq\sqrt{\kappa}\left(\hbar\frac{d}{\sqrt{m}}(\exp(\S/d)-1)-\frac{\kappa}{2}\tr\rho r^2\right)\tr\rho r^2.
\end{equation}
The right hand side is quadratic in $\sqrt{\kappa}$, so we can calculate its value at critical $\kappa$ and we get
\begin{equation}
\tr \rho H_{opt} \tr \rho r^2\geq\frac{\hbar^2}{2m}d^2(\exp(\S/d)-1)^2.
\end{equation}
This is the desired statement and all that's left to show is \eqref{eq_sum_cost}.
This is subject to a so called lemma in the last section of the paper \cite{dam-nguyen}. Though it is not a lemma in the mathematical sense since it lacks the necessary precision.

We consider a new Hamiltonian
\begin{equation*}
H=\frac12 P^2 -\frac{W}{q^2}+q^2
\end{equation*}
with $\vec{q}:=\vec{x}(m\kappa)^{1/4}$ and $W\in\R$.
$W$ is chosen such that
\begin{equation*}
\tilde{C}_\kappa=\hbar\sqrt{\kappa/m}\, \min_{\rho} \tr\rho H.
\end{equation*}
To find a minimal state $\rho$ we use Gibbs variational principle (the infinite dimensional version of theorem \ref{thm_gibbs} with $H_2=0$) and get
\begin{equation}\label{eq_proof_gibbs}
\rho=\frac{\exp(-\beta H)}{\|\exp(-\beta H)\|},
\end{equation}
for a $\beta\in\R$. This can be rewritten as
\[
\rho=\frac{1}{\|\exp(-\beta H)\|}\sum_i \exp(\beta E_i)\ketbra{\psi_i},
\]
where $\ket{\psi_i}$ are the eigenvectors and $E_i$ are the eigenenergies of $H$.
So we get
\begin{equation*}
\tilde{C}_\kappa=\frac{\hbar\sqrt{\kappa/m}}{\|\exp(-\beta H)\|}\sum_i \exp(\beta E_i)E_i.
\end{equation*}
By \cite{wolf74} the eigenstates have quantum numbers $n,l\in\N^+_0$ with energy
\[
E(n,l)=2n+\sqrt{l(l+d-2)},
\]
and degeneracy
\[
g(l)=\frac{(d+2l-2)(d+l-3)!}{l!(d-2)!}.
\]
So we get
\[
\tilde{C}_\kappa=\frac{\hbar\sqrt{\kappa/m}}{Z}\sum_{n,l} \exp(-\beta E(n,l))g(l)E(n,l),
\]
where
\begin{align*}
Z
&:=\|\exp(-\beta H)\|\\
&=\sum_{n,l}\exp(-\beta E(n,l))g(l)
\\
&=\underbrace{\sum_n(\exp(-2\beta n)}_{=:Z_n}\underbrace{\sum_l \exp(-\beta\sqrt{l(l+d-2))g(l)}}_{=:Z_l}.
\end{align*}
Please be aware that the subscripts $l$ and $n$ are not variables. Now we can write
\[
\tilde{C}_\kappa=\h\sqrt{\frac{\kappa}{m}}(U_n+U_l)
\]
for
\[
U_n:=\frac{1}{Z_n}\sum_n 2n\exp(-2\beta n)
\]
and
\[
U_l:=\frac{1}{Z_l}\sum_l\exp(-\beta\sqrt{l(l+d-2)})g(l)\sqrt{l(l+d-2}.
\]
Now by analytic transformations (where we assume $|\beta|<1$ to compute the geometric series)
\begin{align}\nonumber
U_n
&= (1-e^{-2\beta})\cdot2\frac{e^{-2\beta}}{(1-e^{-2\beta})^2}
\\
&=\frac{2}{e^{2\beta}-1},\label{eq_un}
\end{align}
which is equivalent to
\[
\beta=\frac12 \log\Big(\frac{2}{U_n}+1\Big).
\]
We can also simplify the entropy:
\begin{align*}
S
&=
-\sum_{l,n}\frac{1}{Z}g(l)\exp(-\beta E(n,l))\log\Big(\frac{1}{Z}\exp(E(n,l))\Big)
\\
&=\frac{2\beta}{e^{2\beta}-1}-\log(1-e^{-2\beta})
\\
&=\frac{1}{Z}Z_l\sum_n2n\beta\exp(-2n\beta)
\\
&\qquad+\frac{1}{Z}Z_n\sum_l\sqrt{l(l+d-2)} \beta g(l)\exp(-\beta\sqrt{l(l+d-2)})
\\
&\qquad+\log(Z)
\\
&=(1-e^{-2\beta})2\beta\frac{e^{-2\beta}}{(1-e^{-2\beta})^2}
\\
&\qquad+\frac{1}{1}Z_l\sum_l\sqrt{l(l+d-2)} \beta g(l)\exp(-\beta\sqrt{l(l+d-2)})
\\
&\qquad+\log(Z_n)+\log(Z_l)
\\
&=\underbrace{2\beta\frac{1}{e^{2\beta}-1}+\log(Z_n)}_{:=S_n}
\\
&\qquad+\underbrace{\frac{1}{Z_l}\sum_l\sqrt{l(l+d-2)} \beta g(l)\exp(-\beta\sqrt{l(l+d-2)})+\log(Z_l)}_{:=S_l}
.
\end{align*}
Now we can express $S_n$ with respect to the cost $U_n$:
\begin{equation}\label{eq_sn}
S_n=\log\Big(1+\frac{U_n}{2}\Big)+\frac{U_n}{2}\log\Big(1+\frac{2}{U_n}\Big).
\end{equation}
At this point the authors use Sterlings formula and the method of steepest descent to calculate the integral version of the discrete sum to obtain
\[
Z_l\approx2\beta^{-d+1}.
\]
The error of this approximation is estimated by numerical results, which indicate that
\[
|\log(Z_l)-\log(2\beta^{-d+1})|=\mathcal{O}\Big(\frac{\eta}{d}\Big).
\]
Using this, the authors obtain
\begin{equation}\label{eq_num1}
U_l=\frac{d-1}{\beta}+\mathcal{O}\Big(\frac{1}{d}\Big)
\end{equation}
and
\begin{equation}\label{eq_num2}
S_l=(d-1)\Big(\log\frac{U_l}{d-1}+1\Big)+\mathcal{O}\Big(\frac{\beta}{d}\Big).	
\end{equation}

A Taylor approximation at $\beta=0$ of \eqref{eq_un} gives us
\begin{equation}\label{eq_Un}
U_n=\frac{1}{\beta}-1+\mathcal{O}(\beta),
\end{equation}
and keeping only the dominant terms in \eqref{eq_sn} results in
\[
S_n=\log\Big(1+\frac{U_n}{2}\Big)+\mathcal{O}(1).
\]
Using the previous results and $U_n\approx 1/\beta$, we get
\[
\tilde{C}=\h\sqrt{\frac{\kappa}{m}}(dU_n-(d-1))+\mathcal{O}(1/d)).
\]
The approximation $U_n\approx 1/\beta$ is mathematically not correct but using \eqref{eq_Un} instead would result in a stricter inequality in the end. So this does not lead to a wrong statement. We can also calculate for the entropy 
\[
S=d\log(U_n)+\mathcal{O}(1),
\]
where the author use $d\log(U_n)+(d-1)\approx d\log(U_n)$
\[
\tilde{C}_\kappa=d\hbar\sqrt{\frac{\kappa}{m}}\left(\exp(\S/d)-\frac{d-1}{d}+\mathcal{O}(1/d^2)\right),
\]
which yields \eqref{eq_sum_cost} in the limit of $d$ to infinity, and finishes the sketch of the proof.

\subsection{Comment on the Proof}\label{ss_comment}
First there are a few physical issues:
Hamiltonians in higher dimension usually model systems with a big number of particles.
But the type of Hamiltonian we allow does not account for interaction between the particles.
This is a particularly big shortcoming, since we also treat problems with small spacial scales, where interaction might be crucial.
Furthermore, the assumption $d\gg1$ makes it useless for many important applications with one or two digit dimension.

The biggest problem is that the statement was not proven by mathematical standards:  the conjecture about the minimality of the Hamiltonian \eqref{eq_H_opt} is unproven and, as mentioned, the authors made assumptions which are based on numerical results \eqref{eq_num1} and \eqref{eq_num2}.

We found two examples that illustrate the limitations of the the energy-surface bound \eqref{dn_bound}:

Our first example (section \ref{sec_ex_1}) violates the inequality for sufficiently large $d$. This contradicts the statement since the violation holds in the limit for $d\to\infty$ and the entropy grows arbitrarily.
Because of this, there must be an error in the argumentation of the paper:
either $H_{opt}$ in \eqref{eq_H_opt} is not the unique optimum in the optimization problem \eqref{eq_min_cost} or $H_{opt}$ itself does not obey the energy-space inequality \eqref{dn_bound}.
The second possibility could be the result of the mathematical imprecisions, we mentioned earlier.
Either way, we found an error at a crucial point in the paper.

The second example (theorem \ref{thm_ex_2}) violates the inequality for small quantum numbers, dimension $1$ and arbitrarily small entropy.
Because of the small entropy, this example might not be important for applications, but it violates the inequality for any factor.
It shows that the energy-surface bound can not be applied at all in this setting, and the limit $d\to\infty$ is crucial.

In the following section, we present the two counterexamples.

\subsection{The Harmonic Oscillator and the Inequality} \label{sec_ex_1}
We will calculate the quantities in the energy surface inequality for a case of the normalized $d$-dimensional harmonic oscillator.
The definitions and their properties can be found in \cite{griffiths05}, section 2.3.
This will give us our first violation of the energy-surface bound and demonstrates how the quantities scale.
The Hamiltonian is 
\begin{align} \label{eq_harm_osc}
H&=\frac{1}{2m}P^2+\frac{m\omega^2}{2} Q^2 -\frac{\hbar\omega d}{2}\1
\\
&=\sum_{k=1}^d \left( \frac{1}{2m} P_k^2+\frac{m\omega^2}{2} Q_k^2 \right) -\frac{\hbar\omega d}{2}\1, \nonumber
\end{align}
which is the sum of one-dimensional Hamiltonians acting on each dimension.
We denote the quantum numbers $n_1,\dotsc,n_d\in\N^0$ as $\vec{n}=(n_1,\dotsc,n_d)$ and get the eigenfunctions 
\begin{align*}
\psid&=\ket{\psi_{n_1}}\dotsb\ket{\psi_{n_d}},
\end{align*}
which are separable, i.e. $\ket{\psi_{n_k}}$ is a function of $x_k$ for all $k=1,\dotsc,d$.
They have eigenenergies
\[
E_{\vec{n}}=\hbar\omega\sum_{k=0}^d(n_k).
\]
Now we calculate the spacial costs of an energy eigenstate:

\begin{lemma}\label{lem_q^2}
We have for all $\vec{n}\in(\N^0)^d$ that
\begin{equation*}
\Braket{\psi_{\vec{n}}|Q^2|\psi_{\vec{n}}}=\frac{\h}{2m\omega}\sum_{k=1}^d(2n_k+1).
\end{equation*}
\end{lemma}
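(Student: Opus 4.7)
The plan is to reduce the $d$-dimensional expectation value to a sum of well-known one-dimensional expectation values using the separability of the eigenstates $\psid = \ket{\psi_{n_1}}\dotsb\ket{\psi_{n_d}}$ and the structural form of $Q^2$.

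First, I would expand $Q^2$ in terms of the coordinate operators $Q_k$. Since the paper defines $Q$ via $Q\psi(x) = \sum_{k=1}^d Q_k\psi(x)$, the square acts as multiplication by $\bigl(\sum_k x_k\bigr)^2$, so
\[
Q^2 = \sum_{k=1}^d Q_k^2 + 2\sum_{1\le k<j\le d} Q_k Q_j.
\]
Taking the expectation value in the separable state $\psid$, the operator $Q_k^2$ acts only on the $k$-th factor and the operator $Q_k Q_j$ for $k\neq j$ factorises, so
\[
\braket{\psi_{\vec n}|Q^2|\psi_{\vec n}} = \sum_{k=1}^d \braket{\psi_{n_k}|Q_k^2|\psi_{n_k}} + 2\sum_{k<j}\braket{\psi_{n_k}|Q_k|\psi_{n_k}}\braket{\psi_{n_j}|Q_j|\psi_{n_j}}.
\]

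Next, I would invoke the standard one-dimensional harmonic oscillator identities. Writing $Q_k = \sqrt{\tfrac{\h}{2m\omega}}(a_k+a_k^\dagger)$ in terms of the ladder operators of the $k$-th factor, a direct calculation (using $a_k\ket{\psi_{n_k}} = \sqrt{n_k}\ket{\psi_{n_k-1}}$ and $a_k^\dagger\ket{\psi_{n_k}} = \sqrt{n_k+1}\ket{\psi_{n_k+1}}$) gives
\[
\braket{\psi_{n_k}|Q_k|\psi_{n_k}} = 0, \qquad \braket{\psi_{n_k}|Q_k^2|\psi_{n_k}} = \frac{\h}{2m\omega}(2n_k+1).
\]
The vanishing of $\braket{\psi_{n_k}|Q_k|\psi_{n_k}}$ is just the parity symmetry of Hermite functions, and it kills every cross term in the above decomposition. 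Plugging both identities back yields
\[
\braket{\psi_{\vec n}|Q^2|\psi_{\vec n}} = \sum_{k=1}^d \frac{\h}{2m\omega}(2n_k+1) = \frac{\h}{2m\omega}\sum_{k=1}^d(2n_k+1),
\]
as claimed.

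There is no serious obstacle here; the only thing to be careful about is the bookkeeping of the mixed terms $Q_kQ_j$ with $k\neq j$ arising from squaring the sum. Once one notes that they factorise on the product state and that each one-dimensional expectation $\braket{\psi_{n_k}|Q_k|\psi_{n_k}}$ vanishes by parity, the result follows immediately from the standard textbook identity for the one-dimensional oscillator (see e.g.\ \cite{griffiths05}).
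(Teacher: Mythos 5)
Your proof is correct and follows essentially the same route as the paper's: reduce to one-dimensional expectation values via separability of $\psid$ and evaluate $\braket{\psi_{n_k}|Q_k^2|\psi_{n_k}}$ with the ladder operators $a_k, a_k^\dag$. The one difference is that you explicitly justify dropping the cross terms $Q_kQ_j$ ($k\neq j$) via $\braket{\psi_{n_k}|Q_k|\psi_{n_k}}=0$, whereas the paper passes directly from $Q^2$ to $\sum_k Q_k^2$ without comment -- given the paper's definition $Q=\sum_i Q_i$, your version is actually the more careful one.
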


\begin{proof}
The definitions and their unproven properties are taken from (\cite{griffiths05}, part I, 2.3).
We define creation and annihilation operators $a_k^\dag$, $a_k$, $k\in\{1,\dotsc,d\}$, which act on $\mathcal{L}_2(\R^d)$ as
\begin{align*}
a_k^\dag&= \sqrt{\frac{1}{2\h m\omega}}\big(m\omega Q_k-iP_k\big)
\\
a_k&= \sqrt{\frac{1}{2\h m\omega}}\big(m\omega Q_k+iP_k\big).
\end{align*}
They have the properties
\[
a_k^\dag\ket{\psi_{n_1,\dotsc,n_d}} =\sqrt{n_k+1} \ket{\psi_{n_1,\dotsc,n_k+1,\dotsc,n_d}}
\]
 and
\[
a_k\ket{\psi_{n_1,\dotsc,n_d}} =\sqrt{n_k} \ket{\psi_{n_1,\dotsc,n_k-1,\dotsc,n_d}}.
\]
For the commutator we have $[a_k,a_k^\dag]=\1$.
One easily sees that
\begin{equation}
Q_k=\sqrt{\tfrac{\h}{2m\omega}}(a_k^\dag+a_k) \label{x_constr}
\end{equation}

With \eqref{x_constr} the claim follows directly:
\begin{align*}
\Braket{\psi_{\vec{n}}|Q^2|\psi_{\vec{n}}}
&=\sum_{k=1}^d\Braket{\psi_{\vec{n}}|Q_k^2|\psi_{\vec{n}}}
\\
&=\sum_{k=1}^d\Braket{\psi_{{n_k}}|Q_k^2|\psi_{{n_k}}}
\\
&=\frac{\h}{2m\omega}\sum_{k=1}^d\Braket{\psi_{{n_k}}|(a_k^\dag+a_k)^2|\psi_{{n_k}}}
\\
&=\frac{\h}{2m\omega}\sum_{k=1}^d\Braket{\psi_{{n_k}}|a_k^\dag a_k+a_ka_k^\dag|\psi_{{n_k}}}
\\
&=\frac{\h}{2m\omega}\sum_{k=1}^d (2n_k+1).
\end{align*}
\end{proof}
As example state we choose the equal distribution over all eigenstates that have quantum numbers $\leq l$:
\begin{align}\label{eq_ex_1}
\rho=\frac{1}{\alpha}\sum_{\vec{n}\in A}\ketbra{\psi_{\vec{n}}},
\end{align}
with $A=\lbrace\vec{n}|n_1,\dotsc,n_d\in \{0,\dotsc,l\}\rbrace$, $\alpha=|A|=(l+1)^d$.
By rearranging the summation order we calculate its energy and spacial costs:
\begin{align}
\tr(\rho H) \nonumber
&=\frac{1}{\alpha}\sum_{\vec{n}\in A}\Braket{\psi_{\vec{n}}|H|\psi_{\vec{n}}}
\\\nonumber
&=\frac{1}{\alpha}\sum_{\vec{n}\in A}\sum_{k=0}^d\hbar\omega n_k
\\\nonumber
&=\frac{d}{l+1}\sum_{k=0}^l \hbar \omega k
\\\label{eq_cost_en}
&=\frac{\hbar\omega}{2} d l
\end{align}
and with lemma \ref{lem_q^2} we get
\begin{align}
\tr(\rho Q^2) \nonumber
&=\frac{1}{\alpha}\sum_{\vec{n}\in A}\Braket{\psi_{\vec{n}}|Q^2|\psi_{\vec{n}}}
\\\nonumber
&=\frac{\hbar}{2m\omega\alpha} \sum_{\vec{n}\in A}\sum_{k=0}^d (2n_k+1)
\\\nonumber
&=\frac{\hbar}{2m\omega}\Big(d+1+ \frac{2}{\alpha}\sum_{\vec{n}\in A}\sum_{k=0}^d (n_k)\Big)
\\\nonumber
&=\frac{\hbar}{2m\omega} \Big(d+1+2\frac{d}{l+1}\frac{(l+1)l}{2}\Big)
\\\label{eq_cost_sp}
&=\frac{\hbar}{2m\omega}(dl+d+1).
\end{align}

Now for the right hand side of the inequality \eqref{dn_bound}. We calculate the entropy
\begin{align*}
\S(\rho)&=-\sum_{\vec{n}\in A}\frac{1}{\alpha}\log\Big(\frac{1}{\alpha}\Big)\\
&=\log(\alpha)\\
&=d\log(l+1),
\end{align*}
and get
\begin{align*}
\frac{\hbar^2}{2m} d^2\left(e^{\S(\rho)/d}-1\right)^2&=\frac{\hbar^2}{2m}d^2l^2,
\end{align*}
so the inequality \eqref{dn_bound} reads
\[
\frac{\hbar^2}{4m}dl(dl+d+1)\geq\frac{\hbar^2}{2m}d^2l^2,
\]
which is equivalent to
\begin{equation}\label{eq_ex_1_res}
\frac{1}{2}(dl+l+1)\geq dl.
\end{equation}

We clearly have a violation for $d\to\infty$. So the inequality can be violated for arbitrary big entropy and big enough dimension. This violation can become arbitrarily close to a factor 2.
But ignoring this error, we see that both sides in this example of the inequality have the same scaling in the limit $d$ to infinity.
With a better choice of $\rho$, one can obviously achieve a larger violation, but we couldn't find better computable examples in the context of this work.

\subsection{Further Counterexample}

We have already violated the inequality by a factor 2. By optimizing the state $\rho$ this factor can easily be increased. Next we even show that the inequality is asymptotically wrong for small dimensions $d$, meaning that it will hold for no constant factor.

\begin{theorem}[Counterexample for arbitrary factor]\label{thm_ex_2}
For all $C>0$, the bound
\[
\tr[\rho H] \, \tr[\rho Q^2] \geq C \frac{\h^2}{2m}(\exp(\S(\rho))-1)^2 
\]
can be violated by the one-dimensional harmonic oscillator. The entropy of $\rho$, which is diagonal in the eigenbasis of $H$, can become arbitrarily small.
\end{theorem}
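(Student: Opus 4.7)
The plan is to exhibit, for every $C > 0$, a density matrix $\rho$ diagonal in the eigenbasis of the normalized one-dimensional harmonic oscillator Hamiltonian that violates the stated inequality, and to arrange the construction so that $\S(\rho)$ can be made arbitrarily small. Writing $\rho = \sum_n p_n \ketbra{\psi_n}$, using Lemma~\ref{lem_q^2} specialised to $d = 1$, and using $E_n = \h\omega n$, both costs reduce to elementary functions of the mean occupation number $\bar{n} := \sum_n n\, p_n$: $\tr(\rho H) = \h\omega \bar{n}$ and $\tr(\rho Q^2) = \frac{\h}{2m\omega}(2\bar{n} + 1)$. Hence the left-hand side equals $\frac{\h^2}{2m}\bar{n}(2\bar{n} + 1)$ and depends on the distribution only through $\bar{n}$, whereas the right-hand side depends on the whole occupation vector $(p_n)$ through the Von Neumann entropy.

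Next I would pick a convenient one-parameter family $(\rho_\epsilon)$ of diagonal states and analyse the dimensionless ratio
\[
r_\epsilon := \frac{\bar{n}_\epsilon(2\bar{n}_\epsilon + 1)}{(\exp(\S(\rho_\epsilon)) - 1)^2}.
\]
Natural candidates are two-level mixtures such as $\rho_\epsilon = (1-\epsilon)\ketbra{\psi_0} + \epsilon\ketbra{\psi_n}$, or uniform distributions on the first $N$ eigenstates (as in the previous counterexample for larger $d$), all of which admit closed-form expressions for both the energy-surface cost and the entropy. Using the expansions of $-\epsilon\log\epsilon - (1-\epsilon)\log(1-\epsilon)$ near $\epsilon = 0$ and of $\exp(x) - 1$ near $x = 0$, one can compute $r_\epsilon$ asymptotically and seek parameter choices that drive it below any prescribed $C$ while simultaneously $\S(\rho_\epsilon) \to 0$.

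The main obstacle is the tension between these two requirements. Since the left-hand side is controlled only by $\bar{n}$, and the entropy for fixed $\bar{n}$ is maximised by the thermal (geometric) distribution, a naive family cannot push $r_\epsilon$ below a positive floor without inflating the entropy. I expect the crucial step to be identifying the right balance between $\epsilon$ and a secondary support parameter $n$ or $N$: one wants to concentrate almost all weight on $\ket{\psi_0}$ so that $\bar{n}$ and thus the left-hand side are small, while still distributing enough residual weight on higher eigenstates that $(\exp(\S(\rho_\epsilon))-1)^2$ dominates $\bar{n}_\epsilon(2\bar{n}_\epsilon + 1)$. Once such a family is constructed, verifying the two asymptotics $r_\epsilon < C$ for small $\epsilon$ and $\S(\rho_\epsilon) \to 0$ completes the argument.
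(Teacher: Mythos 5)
Your reduction of the left-hand side to $\frac{\h^2}{2m}\,\bar{n}(2\bar{n}+1)$ with $\bar{n}=\sum_n n p_n$ is correct and is actually sharper than anything in the paper's own argument, which simply commits from the outset to the two-level family $\rho(p)=(1-p)\ketbra{\phi_0}+p\ketbra{\phi_1}$ (your first candidate with $n=1$). But the proposal stops at exactly the step that carries the entire weight of the theorem: you never exhibit a family with $r_\epsilon<C$ and $\S(\rho_\epsilon)\to0$, you only express the expectation that ``the right balance'' of parameters exists. That is a genuine gap, not a deferred detail.

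Worse, the ``tension'' you flag is not an obstacle to be balanced away; made quantitative, it shows the construction cannot succeed. For a diagonal state with fixed $\bar{n}$ the entropy is maximised by the geometric distribution, so
\[
e^{\S(\rho)}\;\leq\;\frac{(\bar{n}+1)^{\bar{n}+1}}{\bar{n}^{\bar{n}}}
=(\bar{n}+1)\Big(1+\tfrac{1}{\bar{n}}\Big)^{\bar{n}}
\;\leq\; e\,(\bar{n}+1),
\]
hence $\big(e^{\S(\rho)}-1\big)^2$ is at most of order $\bar{n}^2$, while the left-hand side is $\frac{\h^2}{2m}\bar{n}(2\bar{n}+1)\geq\frac{\h^2}{2m}\,2\bar{n}^2$. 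A short computation then gives
\[
\frac{\tr(\rho H)\,\tr(\rho Q^2)}{\frac{\h^2}{2m}\big(e^{\S(\rho)}-1\big)^2}
\;\geq\;
\frac{\bar{n}(2\bar{n}+1)}{\big((\bar{n}+1)^{\bar{n}+1}\bar{n}^{-\bar{n}}-1\big)^2},
\]
which is bounded below by a positive constant for all $\bar{n}>0$ (its infimum is $2/e^{2}\approx0.27$, approached only as $\bar{n}\to\infty$, i.e.\ as $\S\to\infty$; for $\bar{n}\to0$ the ratio diverges like $1/(\bar{n}(\log\bar{n})^2)$). So every state diagonal in the oscillator eigenbasis satisfies the bound with $C=2/e^{2}$, and the statement ``for all $C>0$'' cannot be established along your lines. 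It cannot be established along the paper's lines either: the paper's proof fails at the L'H\^opital step, since $\limp\tfrac{2((\log p)^2-\log p)}{4+1/p}=0$ rather than $+\infty$ (the denominator $1/p$ dominates $(\log p)^2$ because $p(\log p)^2\to0$); along $\rho(p)$ the ratio above behaves like $1/(p(\log p)^2)\to+\infty$, which confirms your ``positive floor'' intuition rather than refuting it. The honest conclusion of your own setup is that the theorem is false as stated, and any violation of \eqref{dn_bound} for small $C$ must come from a different Hamiltonian or from non-diagonal states.
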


\begin{proof}
We write the inequality \eqref{dn_bound} for the one-dimensional harmonic oscillator and the pa\-ram\-e\-ter\-ized state
\[
\rho(p)=(1-p)\ketbra{\phi_0}+p \ketbra{\phi_1},\qquad p\in [0,1].
\]
For convenience we define $f$ and $g$ as the right hand side and the left hand side with respect to $\rho(p)$, respectively:
\begin{align*}
f(p)
&=\frac{\h^2}{2m}(\exp(\S(\rho(p)))-1)^2
\\
&=\frac{\h^2}{2m}\left( e^{-p\log(p)-(1-p)\log(1-p)}-1\right)^2,
\\
g(p)
&=\tr\left(\rho(p) H\right) \tr\left(\rho(p) Q^2\right)
\\
&=\frac{\h^2}{2m}p((1-p)+3p)\\
&=\frac{\h^2}{2m}(2p^2+p).
\end{align*}
Now we will show 
\[
\lim_{p\to0^+}\frac{f(p)}{g(p)} = \infty,
\]
which proves our statement.\\
Using the Landau notation for $p\to 0^+$, we have
\begin{align*}
f(p)&=\frac{\h^2}{2m}\left(\ksumi \frac{(-p\log(p))^k}{k!}
\cdot\ksumi \frac{\left( (p\log(1-p)\right)^k}{k!} 
\cdot\ksumi p^k-1 \right)^2\\
&=\frac{\h^2}{2m}\Big(\left(1-p\log(p)+\tfrac12 p^2\log(p)^2+o(p^2)\right)\\
&\quad\cdot\left( 1-p\log(1-p)+o(p^2)\right)\cdot\left(1+p+p^2+o(p^3)\right)-1\Big)^2\\
&=\frac{\h^2}{2m}\left(p+p^2+p\log(1-p)-p\log(p)(1+p)+o(p^2)\right)^2\\
&=\underbrace{\frac{\h^2}{2m}p^2\left(1-2\log(p)+\log(p)^2\right)}_{:=\tf(p)}+o(p^2),
\end{align*}
where we used that $p\log(p)^k, \log\left(1-p\right), \log(p)\log(1-p)\in o(1)$ for all \mbox{$k\in\N$}.
Now we have $\lim_{p\to0^+}\frac{f(p)}{g(p)}=\lim_{p\to0^+}\frac{\tf(p)}{g(p)}$ if one of the limits converges.\\
To calculate the limit, we apply L'Hôpital's rule. All requirements are fulfilled: on $\R^+$ we have that $\tf,g$ are differentiable with $\tf'(p)=2p(-\log(p)+\log(p)^2)$ and $g$ is non-zero. We also check that $\limp\tf(p)=\limp g(p)=0$ and 
\[
\limp\frac{\tf'(p)}{g'(p)}=\limp\frac{2(-\log(p)+\log(p)^2)}{4+\frac{1}{p}}=+\infty.
\]
So by L-Hôpital's rule 
\[
\limp \frac{f(p)}{g(p)}=\limp \frac{f'(p)}{g'(p)}=+\infty,
\]
which concludes the proof.
\end{proof}
\section{Variational Principle for Gibbs States}
\label{sec_GVP}
In this chapter we prove a useful theorem about states with optimal entropy under linear constraints.
It was used in its infinite-dimensional version in section \ref{ss_justific}, eqation \eqref{eq_proof_gibbs} and plays an important role in section \ref{sec_findimap}.
This statement for finite as well as for infinite dimensions but only for one constraint is proven in \cite{carlen10}, theorem 1.3.
There the author used a different method to prove it.

The proof in this section is based on the arguments in \cite{neumann27}, p.279 ff.

\begin{remark*}
Let $\mathcal{H}$ be a finite dimensional Hilbert space. In the following we will identify $\mathcal{H}$ with $\C^n$
\begin{itemize}
	\item The set of self-adjoint operators on $\mathcal{H}$ is denoted by $\mathbf{H}=\{\rho\in\Cn|\rho^\dag=\rho\}$.
$\mathbf{H}$ is a $n^2$-dimensional $\R$-subspace of $\mathcal{B(H)}$.
	\item We set $[a,b]_\mathcal{H}:=\{\rho\in\mathcal{B(H)}|a\1\leq\rho\leq b\1\}$.
	\item $\Cn$ equipped with the Hilbert–Schmidt inner product
\[
\langle A,B\rangle = \tr \left(A^\dag B\right)
\]
for $A,B\in \Cn$ is a Hilbert space.
	\item If $M\subset \Cn$ is an affine space, then $\inter_M(X)$ is the interior of $X\subseteq M$ with respect to the subspace topology.
	$\Cn$ is equipped with the topology induced by the trace norm.
	 For $X\subseteq\Cn$ we set $\inter(X):=\inter_A(X)$, with A being the smallest affine space containing $X$.
	 Analogously we set $\dim(X):=\dim(A).$
	\item The operator norm on $\Cn$ will be denoted as $\VERT{\cdot}$.
	\item For a differentiable function $f:\Cn\to\C$ and a matrix $\rho=(\rho_{ij})_{ij}\in\Cn$ we define the gradient of $f$ in $\rho$ by
\[
\langle\nabla f(\rho),\sigma \rangle=\mathrm{d}f(\rho)\sigma
\]
for all $\sigma\in\Cn$.
We used the total differential ${\rm d}f({\rho})\colon\C^n\to\C,\sigma\mapsto\partial_{\sigma}f({\rho})=\left.\frac{d}{dt}f(\rho+t\sigma)\right|_{t=0}$.
\end{itemize}

\end{remark*}
\begin{lemma}\label{lem_diff}
Let $\mathcal{H}$ be a finite-dimensional Hilbert space.
For an analytic function $f:[a,b]\to\C$ with $\rho\in\inter_\mathbf{H}([a,b]_\mathcal{H})$ and $V\in\Cn$, $\VERT{V}\leq1$ we have
\[
\left.\frac{d}{dt} \tr\left(f(\rho+tV)\right)\right\vert_{t=0}=\tr\left(Vf'(\rho)\right).
\]
In the appendix (lemma \ref{lem_matr_ps}) we show that $f$ and $f'$ are well defined in this equation.
We assume that $f$ has a series expansion around $x_0\in\R$ with convergence radius $r>0$ such that $(a,b)\subseteq B_{r}(x_0)$ and that f converges absolutely on the convergence radius.

\end{lemma}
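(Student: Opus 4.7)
The approach is to reduce everything to the power series expansion of $f$, differentiate term-by-term, and use cyclicity of the trace. Write $f(z) = \sum_{k \geq 0} c_k (z - x_0)^k$ and set $A := \rho - x_0 \1$. Because $\rho \in \inter_\mathbf{H}([a,b]_\mathcal{H})$ and $(a,b) \subseteq B_r(x_0)$, the spectrum of $\rho$ lies strictly inside $(a,b)$, so $\VERT{A} < r$. Fix an $r'$ with $\VERT{A} < r' < r$ and a $\delta > 0$ small enough that $\VERT{A + tV} \leq r'$ for all $|t| \leq \delta$.

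By lemma \ref{lem_matr_ps}, the matrix series $f(\rho + tV) = \sum_k c_k (A + tV)^k$ converges in operator norm for $|t| \leq \delta$. Using the elementary bound $|\tr(M)| \leq n\VERT{M}$, we obtain the majorant $|c_k \tr((A+tV)^k)| \leq n |c_k| r'^k$, which is summable since $r' < r$ and $f$ converges absolutely on $B_r(x_0)$. Hence the series $F(t) := \sum_k c_k \tr((A+tV)^k)$ converges uniformly on $|t| \leq \delta$ and equals $\tr(f(\rho + tV))$.

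Each summand $t \mapsto \tr((A+tV)^k)$ is polynomial in $t$, so the product rule together with cyclicity of the trace yields
\[
\frac{d}{dt}\tr\bigl((A+tV)^k\bigr) = \sum_{j=0}^{k-1}\tr\bigl((A+tV)^j V (A+tV)^{k-1-j}\bigr) = k\,\tr\bigl(V(A+tV)^{k-1}\bigr).
\]
These derivative terms are bounded in absolute value by $k n \VERT{V} r'^{k-1} \leq k n r'^{k-1}$, and the majorant $\sum_k k |c_k| r'^{k-1}$ is finite since it is the value at $r'$ of the formally differentiated power series, whose radius of convergence coincides with $r > r'$. By the standard theorem on uniform convergence of derivatives, $F$ is therefore differentiable on $(-\delta,\delta)$ and
\[
F'(0) = \sum_{k \geq 1} c_k \, k \, \tr(V A^{k-1}) = \tr\Bigl(V \sum_{k \geq 1} k c_k A^{k-1}\Bigr) = \tr\bigl(V f'(\rho)\bigr),
\]
where the last interchange uses continuity of the trace in the operator norm together with the absolute convergence of $\sum_k k c_k A^{k-1}$ established above.

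The main obstacle is the double interchange of limits: differentiation with the infinite sum, and the trace with the infinite sum. Both rely on the absolute convergence of $f$ on the closed disk of radius $r$ assumed in the statement, combined with the strict interior condition on $\rho$, which is precisely what provides the positive slack $r - \VERT{A}$ needed to absorb the perturbation $tV$ and produce a geometric majorant uniformly in $t$.
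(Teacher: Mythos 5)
Your proposal is correct and follows essentially the same route as the paper: prove the monomial case via cyclicity of the trace, bound each differentiated term using $|\tr(M)|\leq n\VERT{M}$ and the slack between $\VERT{\rho-x_0\1}$ and $r$, and invoke the Weierstrass M-test together with the standard theorem on uniform convergence of derivatives to differentiate the series term by term. The only cosmetic difference is that the paper first recenters the series at $0$ and works with $\eps=\tfrac12(r-\VERT{\rho})$, whereas you keep the center at $x_0$ and introduce an intermediate radius $r'$; the substance is identical.
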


\begin{proof}
We first show the statement for monomials. By linearity it is then also true for polynomials. We will then show that it extends to all analytic functions by uniform convergence.

Now let $f:\rho\mapsto\rho^k$ for a fixed $k\in\N$ and $V\in\Cn$, then
\[
\ddt\, \left. \tr\left(f(\rho+tV)\right)\right|_{t=0}=\tr\left(\rho^{k-1}V\right)+\tr\left(\rho^{k-2}V\rho\right)+\dots+\tr\left(V\rho^{k-1}\right)
\]
since all other summands are zero due to $t=0$. But by the cyclicity of the trace the remaining summands are identical and we get
\[
\ddt\, \left.\tr\left(\left(\rho+tV\right)^k\right)\right|_{t=0}=k\tr\left(V\rho^{k-1}\right).
\]
This proves our statement for all monomials $f$ and thus for all polynomials.
Now we generalize this to $t_0\in\R$:
\begin{align*}
\frac{d}{dt}\left. \tr\left(f(\rho+tV)\right)\right|_{t=t_0}&=\frac{d}{dt}\left. \tr\left(f(\rho+t_0V+tV)\right)\right|_{t=0}\\
&=k\tr\left(V\left(\rho+t_0V\right)^{k-1}\right),
\end{align*}

Now let $f$ be an analytic function with coefficients $(a_k)$ and convergence radius $r>0$ around 0 and $(a,b)\subseteq B_{r}(x_0)$:
\[
f(x)=\sum_{k=0}^\infty a_kx^k \qquad \forall |x| <r.
\]
This covers the general case, since all functions $g(x)=\sum_{k=0}^\infty b_k(x-x_0)^k$ with convergence radius $r$ around $x_0$ can be shifted to $\tilde{g}(x):=g(x+x_0)=\sum_{k=0}^\infty b_kx^k$, which has convergence radius $r$ around $0$.
Now we have for all $X\in[x_0-r,x_0+r]_\mathcal{H}$ that $g(X)=\tilde{g}(X-x_0\1)$ with $(X-x_0\1)\in[-r,r]_\mathcal{H}$. Analogously we have $g'(X)=\tilde{g}'(X-x_0\1)$, so if we have shown the desired statement for $\tilde{g}$ it translates to $g$.

Now back to $f$: $f'$ also has convergence radius $r$ around $0$ and is given by
\begin{equation}
f'(x)=\sum_{k=1}^\infty k a_k x^{k-1} \qquad \forall |x|<r. \label{eq_f_prime}
\end{equation}

For $\rho\in[a,b]_\mathcal{H}$, $I:=[-\eps,\eps]	$, $\eps=\frac12 (r-\VERT{\rho})$, $n\in\N$ we define the partial sums $g_n:I\to\C$ by
\[
g_n(t)=\tr( \sum_{k=0}^n a_k (\rho+tV)^k).
\]
Note that $\epsilon>0$ since $\rho\in\inter_\mathbf{H}([a,b]_\mathcal{H})$.
We have already proven that 
\[
g_n'(t)=\tr(V\sum_{k=1}^n ka_k(\rho+tV)^{k-1}).
\]
Let $\|\cdot\|_I$ denote the uniform norm on $I$. At the end we will show that the sum of the uniform norm of the summands of $g_n'$ is finite, i.e.
\begin{equation}
\sum_{k=1}^\infty\left\| k a_k \tr(V(\rho+tV)^{k-1})\right\|_I<\infty. \label{eq_sum_norms}
\end{equation}
By the Weierstrass M-test we then have that $g_n'$ converges uniformly on $I$ to a function $h$.
But we also have $\lim_{n\to\infty}g_n(t)=\tr(f(\rho+tV))$ for all $t\in I$ pointwise:
by definition $\VERT{\rho+tV}<r$, so both sides of the equation converge and we name the limit $g(t)$.

To sum things up we then have on $I$ that $g_n'\to h$ uniformly and $g_n\to g$ pointwise.
By the theory of uniform convergence (\cite{howie01}, theorem 7.11) we then have that $g$ is differentiable and has derivative $g'=h$, which is our statement.
The only thing left to show is \eqref{eq_sum_norms}, which we will do now.
For the summands we have:
\begin{align*}
\left\|k\,a_k\tr(V(\rho+tV)^{k-1})\right\|_I
&\leq
k\, d\, \lvert a_k\,\rvert\, \VERT{\rho+tV}^{k-1}
\\
&< k\, d\, \lvert a_k\rvert\, \left(\VERT{\rho}+\eps\right)^{k-1}
\\
&= k\, d\, \lvert a_k\rvert\, \left(r-\eps\right)^{k-1}
\end{align*}
So we can use that $f'$ converges within $r$ (c.f. \eqref{eq_f_prime}):
\[
\sum_{k=1}^\infty\left\| k a_k \tr(V(\rho+tV)^{k-1})\right\|_I
<d\sum_{k=1}^\infty k\,
\lvert a_k\rvert\, \left(r-\eps\right)^{k-1}<\infty,
\]
which completes the proof.
\end{proof}

\begin{lemma}[Gradient of von Neumann entropy]\label{lem_gr_vNe}
Let $\mathcal{H}$ be a finite-dimensional Hilbert space.
We have that the von Neumann entropy $\S$ is well defined on $[0,1]_\mathcal{H}$ and particularly on $\mathcal{S(H)}$. For all $\rho\in\inter_\mathbf{H}([0,1]_\mathcal{H})$ we have
\[
\nabla\S(\rho)= (-\1  -\log(\rho))^\dag.
\]
\end{lemma}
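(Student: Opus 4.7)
The plan is to apply Lemma~\ref{lem_diff} to the function $f(x)=x\log x$ and then read off the gradient from the resulting directional derivative via the Hilbert-Schmidt inner product.

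First I would dispose of well-definedness: since $\rho\in[0,1]_\mathcal{H}$ is self-adjoint with spectrum in $[0,1]$, functional calculus with the continuous function $x\mapsto -x\log x$ (extended by $0$ at $x=0$) yields a non-negative operator whose trace is $S(\rho)\in\R^+_0$. Next, for $\rho\in\inter_\mathbf{H}([0,1]_\mathcal{H})$, a short perturbation argument shows that the spectrum of $\rho$ lies in some open interval $(a,b)$ with $0<a<b<1$. I would then verify the analyticity hypothesis of Lemma~\ref{lem_diff} for $f(x)=x\log x$: expanded around $x_0:=b$ the Taylor series has convergence radius exactly $x_0$ (dictated by the singularity at $0$); for $k\geq 2$ its coefficients have absolute value $1/\bigl(k(k-1)x_0^{k-1}\bigr)$, so by telescoping the series converges absolutely on the closed disk of radius $x_0$; and $(a,b)\subset B_{x_0}(x_0)=(0,2b)$, so the lemma applies.

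With $f'(x)=\log x+1$, Lemma~\ref{lem_diff} then gives, for any $V\in\Cn$ with $\VERT{V}\leq 1$,
\[
dS(\rho)V=-\frac{d}{dt}\tr\bigl(f(\rho+tV)\bigr)\Big|_{t=0}=-\tr\bigl(V(\log\rho+\1)\bigr),
\]
and linearity in $V$ extends this identity to all $V\in\Cn$. Cyclicity of the trace rewrites the right-hand side as $\tr\bigl((-\1-\log\rho)V\bigr)$, and matching this against the definition $\langle\nabla S(\rho),V\rangle=\tr\bigl(\nabla S(\rho)^\dag V\bigr)$ yields $\nabla S(\rho)^\dag=-\1-\log\rho$, which is the claim. (Because $\rho$ is Hermitian the operator $-\1-\log\rho$ is already self-adjoint, but the dagger is kept to match the general convention.)

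The main delicate point is the analyticity check: because $x\log x$ is singular at $0$, no choice of base point $x_0$ gives a series reaching past $0$. This is precisely why the statement has to be restricted to the interior of $[0,1]_\mathcal{H}$, where the smallest eigenvalue of $\rho$ is bounded away from $0$ and a suitable $x_0$ exists; everything after that reduces to a direct appeal to the previous lemma and trace manipulations.
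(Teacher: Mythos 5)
Your proof is correct and follows essentially the same route as the paper: apply Lemma \ref{lem_diff} to $f(x)=\pm x\log x$ and identify the gradient through the Hilbert--Schmidt pairing $\tr(Vf'(\rho))=\langle f'(\rho)^\dag,V\rangle$. The only (immaterial) difference is in how the analyticity hypothesis is checked: the paper expands $x\log x$ around $x_0=1$ with radius $1$ and absolute convergence on the boundary (Lemma \ref{vnEntropy}), which covers all of $[0,1]_\mathcal{H}$ at once, whereas you expand around a point $x_0=b$ adapted to the spectrum of $\rho$.
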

\begin{proof}
We set $f:[0,1]\to\R,x\mapsto -x\log(x)$.
We show in  \ref{vnEntropy} that $f$ is well defined on $[0,1]_\mathcal{H}$ and we have $\S(\rho)=\tr(f(\rho))$.
The derivative is $f':[0,1]\to\R, x\mapsto (-1-\log(x))$ and by lemma \ref{lem_diff}, we have for all $V\in\Cn$, $\VERT{V}\leq1$, $\rho\in\inter_\mathbf{H}([0,1]_\mathcal{H})$
\begin{align*}
\langle\nabla \S(\rho), V\rangle&
=
\frac{d}{dt}\left.\S(\rho+tV)\right|_{t=0}
\\
&=\frac{d}{dt}\left.\tr\left(f(\rho+tV)\right)\right|_{t=0}
\\
&=\tr\left(Vf'(\rho)\right)
\\
&=\langle f'(\rho)^\dag ,V\rangle
\\
&=\langle (-\1-\log(\rho))^\dag ,V\rangle.
\end{align*}
This gives us the desired statement.
\end{proof}

\begin{theorem}[Variational principle for Gibbs states]\label{thm_gibbs}
Let $\mathcal{H}$ be a finite-dimensional Hilbert space.
Let $H_1,H_2\in\Cn$, $H_1,H_2\geq0$, $C_1,C_2 \in \R$. A local optimum of $\S(\rho)$ over  $\rho\in\mathcal{S(H)}$ under the constraints 
\begin{align*}
g_1(\rho):=\tr(\rho H_1)&=C_1\\
g_2(\rho):=\tr(\rho H_2)&=C_2\\
\end{align*}
can be written as a Gibbs state
\begin{equation} \label{eq_gibbs_state}
\rhob:=\frac{\exp(-\beta_1 H_1-\beta_2 H_2)}{\tr(\exp(-\beta_1 H_1-\beta_2 H_2))}
\end{equation}
for $\beta_1,\beta_2\in\R$.
\end{theorem}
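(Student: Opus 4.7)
The plan is to apply Lagrange multipliers in the $n^2$-dimensional real vector space $\mathbf{H}$ of self-adjoint operators, equipped with the Hilbert--Schmidt inner product. The three constraints $\tr(\rho)=1$, $\tr(\rho H_1)=C_1$ and $\tr(\rho H_2)=C_2$ are affine conditions on $\mathbf{H}$; together with the positivity $\rho\geq 0$, they carve out the feasible region. I tacitly assume that the constraint gradients $\1$, $H_1$, $H_2$ are linearly independent (else redundant constraints are dropped) and that at least one strictly positive feasible state exists.

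The first step is to argue that any local optimum $\rho^\ast$ lies in the relative interior of the feasible set, in particular $\rho^\ast\in\inter_\mathbf{H}([0,1]_\mathcal{H})$; this is crucial because Lemma \ref{lem_gr_vNe} only supplies $\nabla\S$ there. The argument exploits that the right-derivative of $x\mapsto -x\log x$ at $x=0$ equals $+\infty$: if $\rho^\ast$ had a zero eigenvalue, then for any strictly positive feasible $\sigma$ the curve $t\mapsto(1-t)\rho^\ast+t\sigma$ stays feasible and
\[
\left.\frac{d}{dt}\S\bigl((1-t)\rho^\ast+t\sigma\bigr)\right|_{t=0^+}=+\infty,
\]
contradicting local optimality.

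Granted interiority, the gradients are explicit. For self-adjoint $A$ the differential of $\rho\mapsto\tr(\rho A)$ on $\mathbf{H}$ is $V\mapsto\tr(VA)=\langle A,V\rangle$, hence $\nabla\tr(\rho A)=A$; so the three constraint gradients are $\1$, $H_1$ and $H_2$. Lemma \ref{lem_gr_vNe} gives $\nabla\S(\rho^\ast)=-\1-\log(\rho^\ast)$, the dagger being trivial on self-adjoint operators. The Lagrange multiplier condition then produces $\mu,\beta_1,\beta_2\in\R$ with
\[
-\1-\log(\rho^\ast)=\mu\1+\beta_1 H_1+\beta_2 H_2.
\]
Exponentiating yields $\rho^\ast=e^{-(1+\mu)}\exp(-\beta_1 H_1-\beta_2 H_2)$, and the trace-one condition absorbs the scalar prefactor into the partition function, producing exactly $\rhob$ as in \eqref{eq_gibbs_state}.

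The principal obstacle is the interiority step: boundary local optima must be rigorously excluded, which hinges on the $+\infty$ right-derivative of $-x\log x$ at zero together with the existence of a strictly positive feasible reference state. Past that hurdle, everything reduces to a routine application of finite-dimensional Lagrange multipliers on $\mathbf{H}$, made concrete by the gradient formula of Lemma \ref{lem_gr_vNe}.
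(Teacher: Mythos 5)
Your interior-point argument coincides with the paper's: both restrict to $\rho\in\inter_\mathbf{H}([0,1]_\mathcal{H})$, apply Lagrange multipliers with the affine constraint gradients $\1,H_1,H_2$, invoke Lemma \ref{lem_gr_vNe} for $\nabla\S(\rho)=-\1-\log(\rho)$, and exponentiate. Where you genuinely diverge is the boundary case. The paper does \emph{not} claim that every local optimum is interior; it accepts that the feasible set $M$ may be entirely contained in the boundary of $[0,1]_\mathcal{H}$, perturbs the constraint values to $C_{i,\epsilon}$ so that the perturbed feasible set meets the interior, and tries to recover the original optimum as a limit of Gibbs states $\rho_\epsilon$. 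You instead rule out boundary optima outright via the $+\infty$ right-derivative of $-x\log x$ at $0$ along the segment towards a strictly positive feasible state. Your route is cleaner and, where it applies, more rigorous than the paper's (the paper's assertion that $\beta_{1,\epsilon},\beta_{2,\epsilon}$ converge is not justified).

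However, your route rests on the tacit assumption that a strictly positive feasible state exists, and this is not a harmless normalization: it can fail under the stated hypotheses. For instance, if $H_1\geq 0$ has nontrivial kernel and $C_1=0$, then every feasible state is supported on $\ker H_1$, so \emph{no} feasible state is strictly positive, your interiority argument has no $\sigma$ to move towards, and indeed the optimum cannot be of the form \eqref{eq_gibbs_state} with finite $\beta_1,\beta_2$ (any such Gibbs state has full rank). So in the degenerate case your proof does not merely have a gap --- the statement itself needs a Slater-type hypothesis, which is exactly the situation the paper's limiting argument is (imperfectly) trying to absorb. You should either add the existence of a strictly positive feasible state as an explicit hypothesis, or supply an argument for the boundary case; as written, the exclusion of boundary optima is asserted for a situation in which it is false. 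You should also make explicit that ``optimum'' means ``maximum'': for a local \emph{minimum} of the strictly concave $\S$ the infinite-slope argument gives no contradiction, and such minima sit at extreme points of $M$, which are again not Gibbs states.
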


\begin{proof}
The the proof is based on the one in \cite{neumann27}, p. 279 ff. We will use that $\S$ is strictly concave, which is proven in (\cite{carlen10}, theorem 2.10).

We optimize over $[0,1]_\mathcal{H}$ and add a third constraint $g_3(\rho):=\tr(\rho)=C_3$. This is more general since our statement then results for $C_3=1$.

First suppose that $\rho\in\inter_\mathbf{H}([0,1]_\mathcal{H})$. This is a non-empty, convex set.
Let $\rho$ be locally optimal.
By the method of Lagrange multipliers we have
\[
\nabla\S(\rho)\in \spa\{\nabla g_1,\nabla g_2,\nabla g_3\}.
\]
This means there are $\lambda_1,\lambda_2,\lambda_3\in\C$ s.th.
\[
\nabla\S(\rho)=\lambda_1 H_1+\lambda_2 H_2+ \lambda_3\1.
\]
But since all matrices are from the $\R$-vector space of self-adjoint matrices, we can choose the coefficients $\lambda_1,\lambda_2,\lambda_3$ real.
So we get
\[
\log(\rho)=-\lambda_1 H_1-\lambda_2 H_2- (1+\lambda_3)\1
\]
and
\begin{align*}
\rho&=\exp\left(-\lambda_1 H_1-\lambda_2 H_2- (1+\lambda_3)\1 \right)\\
&=\exp(-1-\lambda_3) \exp\left( -\lambda_1 H_1+\lambda_2 H_2\right).
\end{align*}
For $C_3=1$ we have $\exp(-1-\lambda_3)=\tr({\exp\left( -\lambda_1 H_1+\lambda_2 H_2\right)})^{-1}$,
and get the statement after renaming the parameters $\lambda_1$ and $\lambda_2$.

Now suppose $\rho\in\mathcal{S(H)}\backslash\inter_\mathbf{H}([0,1]_\mathcal{H})$.
We denote the set of feasible points by
\[
M=\{\sigma\in[0,1]_\mathcal{H}|g_i(\sigma)=C_{i}\}.
\]
We know that $X:=M\cap\inter_\mathbf{H}([0,1]_\mathcal{H})=\emptyset$, because otherwise the maximum of the concave function $\S$ over the convex set $M$ would be in $X$.
Since $\dim([0,1]_\mathcal{H})=n^2$ and $g_i^{-1}(C_i)$ defines an affine hyperplane for $i=1,2,3$, we have that for all $\eps>0$ there are $C_{i,\epsilon}$,  $i=1,2,3$ such that 
\[
X_\epsilon:=\underbrace{\{\sigma\in[0,1]_\mathcal{H}|g_i(\sigma)=C_{i,\epsilon}\}}
_{=:M_\epsilon}
\cap\inter_\mathbf{H}([0,1]_\mathcal{H})\neq\emptyset.
\]
As we have seen, the maximum on $M_\epsilon$ is in $X_\epsilon$, thus it can be written as a Gibbs state 
\[
\rho_\epsilon=C_{3,\epsilon}\exp(-\beta_{1,\epsilon} H_1-\beta_{2,\epsilon} H_2)/\tr(\exp(-\beta_{1,\epsilon}H_1-\beta_{2,\epsilon}H_2)).
\]
Now we know that a subsequence of a sequence $(\rho_{1/n})_n$ converges to a state $\tilde{\rho}\in M$, since $M=\lim_{n\to\infty}M_{1/n}$ and $[0,1]_\mathcal{H}$ is a closed set.
After renaming the sequence $(\rho_{1/n})_n$ we can assume that it converges itself to $\tilde{\rho}$.
Thus $\beta_{1,\epsilon}$ and $\beta_{2,\epsilon}$ converge to some points $\beta_{1},\beta_{1}$, respectively and we have
\[
\tilde{\rho}=C_{3}\exp(-\beta_{1} H_1-\beta_{2} H_2)/\tr(\exp(-\beta_{1}H_1-\beta_{2}H_2)).
\]
Because $\S$ is continuous, $\tilde{\rho}$ is an optimum $M$ and since $\S$ is strictly concave we have $\tilde{\rho}=\rho$.
For $C_3=1$ this shows our statement in this case.

Now we have shown our statement on all of $[0,1]_\mathcal{H}$ and thus on $\mathcal{S(H)}$.
\end{proof}
\section{Finite Dimensional Approximations} \label{sec_findimap}
Instead of studying $\mathcal{L}_2(\R)$, which is a infinite-dimensional Hilbert space, one can consider the finite dimensional case $\Cn$, which is numerically easier to handle.
Sometimes $\mathcal{L}_2(\R)$ can be modelled as the  limit of $\Cn$ for $n \to\infty$ to deduce results from the finite dimensional space.
This procedure is often referred to as the thermodynamic limit.

In this section we will first introduce the finite dimensional analogous of the bound discussed in section \ref{sec_bound}.
Then we present numerical results that result from it.
For completeness, the source-code that generated the data is in appendix \ref{app_code}.

\subsection{Preliminaries}
We search an alternative energy-space bound $f:\R\to\R$ such that
\begin{equation}\label{eq_alt_bound}
\tr(\rho Q^2)\tr(\rho H)\geq f(\S(\rho))
\end{equation}
for the finite-dimensional case (i.e. $\dim(\mathcal{H})=d<\infty$) of the problem described in the preliminaries \ref{sec_prel}.
This $d$ is not to be confused with its use in section \ref{sec_bound}, where it described the number of degrees of freedom.
We can assume without loss of generality that $\mathcal{H}=\C^{d}$ and adjust the other definitions accordingly:
A Hamiltonian $H\in\C^{d\times d}$ is self-adjoint and can be normalized to $H\geq0$.
A matrix $\rho\in\C^{d\times d}$ is called state if it has $\rho\geq0$ and $\tr\rho=1$.
The setting for the finite dimensional case is taken from \cite{torre2003}, which also contains more explanations on the definitions.
For convenience we define the $d$-th root of unity as $\omega=\exp(i\frac{2\pi}{d})$.
Our observables in this case are the position operator
\[
Q=\sum_{k=-\tfrac{n-1}{2}}^{\tfrac{n-1}{2}}k\ketbra{\phi_k},
\]
and the momentum operator
\[
P=\sum_{k=-\tfrac{n-1}{2}}^{\tfrac{n-1}{2}}
\omega^k\ketbra{\psi_k},
\]
in their respective spectral decomposition.
\\
They have the characteristic properties
\[
B=\exp\Big(i\frac{2\pi}{d}Q\Big),
\]
and
\[
T=\exp(-iP),
\]
where 
\[
T=\sum_{k=-\tfrac{n-1}{2}}^{\tfrac{n-1}{2}-1}
|\phi_{k+1}\rangle\langle{\phi_{k}}|
+(-1)^{d+1}\Big|\phi_{-\tfrac{n-1}{2}}\Big\rangle\Big\langle{\phi_{\tfrac{n-1}{2}-1}}\Big|
\]
is the translation operator and
\begin{align*}
B&=\sum_{k=-\tfrac{n-1}{2}}^{\tfrac{n-1}{2}-1}
|\psi_{k+1}\rangle\langle{\psi_{k}}|
+(-1)^{d+1}\Big|\psi_{-\tfrac{n-1}{2}}\Big\rangle\Big\langle{\psi_{\tfrac{n-1}{2}-1}}\Big|,
\end{align*}
is the momentum boost.
\\
We assume that the Hamiltonian can be written as $H=\frac{1}{2} P^2+V(Q)$, where $V$ is given as a Laurent series:
\[
V(Q)=\sum_{k=-\infty}^\infty a_k Q^k
\]
for $a_k\in\R$.

The parity operator is defined as 
\[
\Pi = \sum_{k=-\tfrac{n-1}{2}}^{\tfrac{n-1}{2}}\ketbras{\phi_{k}}{\phi_{-k}}.
\]
Thus we have $\Pi^\dag=\Pi=\Pi^{-1}$.

Although the author can't show that $\tr(\rho Q)=0$ as we did in the infinite-dimensional case, we show that $\rho$ can be chosen symmetric if $H$ is symmetric:
\begin{lemma}
We suppose that $[H,\Pi]=0$. Then the state $\rho$ which maximizes $\S(\rho)$ under the constraints $\tr(\rho H)=C_1$, $\tr(\rho Q^2)=C_2$ is symmetric, i.e.
\[
[\Pi,\rho]=0.
\]
\end{lemma}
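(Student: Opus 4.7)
The plan is to exploit the parity symmetry of the constraints together with the strict concavity of $\S$, which was already invoked in the proof of theorem \ref{thm_gibbs}. The idea is that if $\rho$ is an optimizer, then so is its parity-conjugate $\Pi\rho\Pi$, and strict concavity forces these to coincide.

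First I would collect the elementary algebraic properties of $\Pi$. From the spectral decompositions we have $\Pi^\dag=\Pi=\Pi^{-1}$, and a short calculation from the definitions
\[
\Pi Q=\sum_{k}(-k)\,|\phi_{k}\rangle\langle\phi_{-k}|=-Q\Pi
\]
yields $\{\Pi,Q\}=0$, hence $\Pi Q^2\Pi=Q^2$. Combined with the hypothesis $[H,\Pi]=0$ and $\Pi^2=\1$, this gives $\Pi H\Pi=H$ and $\Pi Q^2\Pi=Q^2$.

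Next, let $\rho$ be the entropy maximizer under the two constraints, and define $\rho':=\Pi\rho\Pi$. Since $\Pi$ is unitary, $\rho'$ is again a state. Using the cyclicity of the trace and the identities just established,
\[
\tr(\rho'H)=\tr(\Pi\rho\Pi H)=\tr(\rho\,\Pi H\Pi)=\tr(\rho H)=C_1,
\]
and analogously $\tr(\rho'Q^2)=C_2$, so $\rho'$ is feasible. Moreover, unitary conjugation leaves the spectrum unchanged, hence $\S(\rho')=\S(\rho)$, so $\rho'$ is also an optimizer.

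Finally, I would invoke strict concavity of $\S$ (used in the proof of theorem \ref{thm_gibbs} and justified via \cite{carlen10}) on the midpoint $\tilde\rho:=\tfrac12(\rho+\rho')$. The constraint functions $g_i$ are linear, so $\tilde\rho$ is feasible, and concavity yields $\S(\tilde\rho)\ge\tfrac12(\S(\rho)+\S(\rho'))=\S(\rho)$. Since $\rho$ is the maximum, equality holds, and strict concavity forces $\rho=\rho'$, i.e.\ $\Pi\rho\Pi=\rho$, which is $[\Pi,\rho]=0$. The only step requiring care is the appeal to strict concavity, which presupposes that the maximizer is unique on the convex feasible set; this is the same assumption underlying theorem \ref{thm_gibbs} and so is already available.
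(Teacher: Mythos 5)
Your proof is correct and follows essentially the same route as the paper: conjugate the maximizer by $\Pi$, observe that the constraints and the entropy are invariant under this conjugation, form the average $\tfrac12(\rho+\Pi\rho\Pi)$, and appeal to concavity. If anything, your version is slightly more complete --- the paper stops at showing that the symmetrized average is \emph{also} a maximizer, whereas your explicit appeal to strict concavity actually forces $\rho=\Pi\rho\Pi$ for the original maximizer, and you additionally supply the verification $\Pi Q^2\Pi=Q^2$ via the anticommutation relation, which the paper leaves implicit.
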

\begin{proof}
Let $\rho$ be a maximizing state. There exists one, since the set of all states is closed. Now we define a state $\hat{\rho}:=\frac12(\Pi\rho\Pi+\rho)$, which is symmetric
\[
\Pi\hat{\rho}\Pi=\hat{\rho}
\]
and hast the same costs as $\rho$:
\[
\tr(\hat{\rho}H)=\frac12(\tr(\Pi\rho\Pi H)+\tr(\rho H))=\tr(\rho H),
\]

\[
\tr(\hat{\rho}Q^2)=\frac12(\tr(\Pi\rho\Pi Q^2)+\tr(\rho Q^2))=\tr(\rho Q^2).
\]
Thus $\hat{\rho}$ fulfils the same constraints as $\rho$.
We have that 
\[
\S(\tilde{\rho})\geq \S(\rho)
\]
because $\S$ is a concave function (\cite{carlen10}, theorem 2.10) and $\S(\rho)=\S(\Pi\rho\Pi)$.
Thus $\tilde{\rho}$ is also a maximizing state.
\end{proof}

\subsection{Numerical Results}\label{ss_numres}
To find an alternative energy-space bound \eqref{eq_alt_bound} we 
approximate 
\begin{equation}\label{eq_finite_min}
\max_{\rho,H}\S(\rho)
\end{equation}
over $\rho\in\mathcal{S(H)}$ and all Hamiltonians $H=P^2+V(Q)$ under the constraint $\tr(\rho H)\tr(\rho Q^2)=C$.
In order to make it computable we relaxed the condition that $\rho$ is diagonal in the eigenstates of $H$.
This enables us to apply the Gibb's variational principle (\ref{thm_gibbs}) and reduces the optimization parameters from $d$ to 2: instead of optimizing over all $(\lambda)_{i=1}^d$ in $\rho=\sum_{i=1}^d\lambda_i\ketbra{\psi_i}$, we optimize over $\beta_1$ and $\beta_2$ of $\rho_{\beta_1,\beta_2}$.
This still solves our problem \eqref{eq_finite_min} because every optimum there can also be written as a Gibbs state \eqref{eq_gibbs_state} with respect to the Hamiltonians $H$ and $Q^2$.
We write for the costs and the entropy,
\begin{align*}
C_H(\beta_1,\beta_2)
&:=\tr(\rho_{\beta_1,\beta_2}H)\tr(\rho_{\beta_1\beta_2}Q^2),
\\
\S(\beta_1,\beta_2)
&:=\S(\rho_{\beta_1,\beta_2}).
\end{align*}

To find a feasible bound, we plot $(\S(\beta_1,\beta_2),C_H(\beta_1,\beta_2 ))$ for a choice of $\beta_1,\beta_2\in\R$ and Hamiltonians $H$.
Now for any suitable $f$, the reflected graph $\{(f(x),x)|x\in\R^+\}$ must govern this set of points.
The parameters $\beta_1,\beta_2$ are chosen such that the plot is representative for costs in the interval  $1\leq C_H(\beta_1\beta_2)\leq 100$.
For $\beta_1$ we used 300 points, equally distributed in $[-5,5]$, while for $\beta_2$ we used 200 points, equally distributed in $[-0.5,2]$.
A dynamical termination criterion is implemented, which avoids the calculation of small/large $\beta_1$ and $\beta_2$ if the costs are outside of our boundaries.
This is based on the assumption that $C_H(\beta_1,\beta_2)$ is monotonically decreasing in both variables.

Considering that the problem is computationally expensive, we limited the choice of Hamiltonians to two types:
Firstly monomials in $Q$:
\[
H=\frac{1}{2}P^2+\sgn(n)\cdot\theta\cdot |Q^n|,
\]
with $\theta\in\{0.1, 0.5,1,5,10\}$ and $n\in\{-3,-2,\dotsc,5\}$.
The dimensions are $d\in\{50,100\}$.
The second type Hamiltonians are given by a Laurent series in $Q$:
\[
H=\frac{1}{2} P^2+\sum_{n=-2}^2a_n\sgn(n)\cdot\theta\cdot |Q^n|,
\]
with $a_n\in \{0.1,1,5\}.$

Figure \ref{fig} shows the results in a scatter plot.
This means, each point represents a state in a certain Hamiltonian.
It also contains the graph of
\begin{equation}\label{eq_nbound}
f^{-1}(x)=\log(\alpha\sqrt{x}+1),
\end{equation}
with $\alpha=2.3455$, which is a good upper bound in all areas of the plot. This $\alpha$ is the smallest coefficient such that the graph of $f^{-1}$ is an upper bound for all points.
This results in
\[
f(x)=\Big(\alpha^{-1}e^x-1\Big)^2
\]
as a new candidate for the energy-surface bound, where $\alpha^{-1}=0.4263$.
This coefficient was optimal for dimension 50 as well as for dimension 100.
For $\alpha=2$, which results in a stronger bound we modify the original bound \eqref{dn_bound} to
\[
\tr\left(\rho H\right)  \tr(\rho Q^2) \geq \frac{\h^2d^2}{2m}(\frac{1}{2}\exp(\S(\rho)/d)-1)^2.
\]
Plugging in the quantities from our first example \eqref{eq_ex_1}, one sees that it already obeys this bound.
In this case the right hand side reduces to 
\[
\frac{\h^2d^2}{8m}(l-1)^2,
\]
which results in a trivial statement for the second example (theorem \ref{thm_ex_2}).
So even if a modified Version with a factor $\alpha$ holds in the limit for $d\to\infty$, it would be very weak in this scale.
\begin{figure}[th!]
    \centering
    \includegraphics[trim=6.5in 1in 5.5in 2in, clip=true, width=\textwidth]{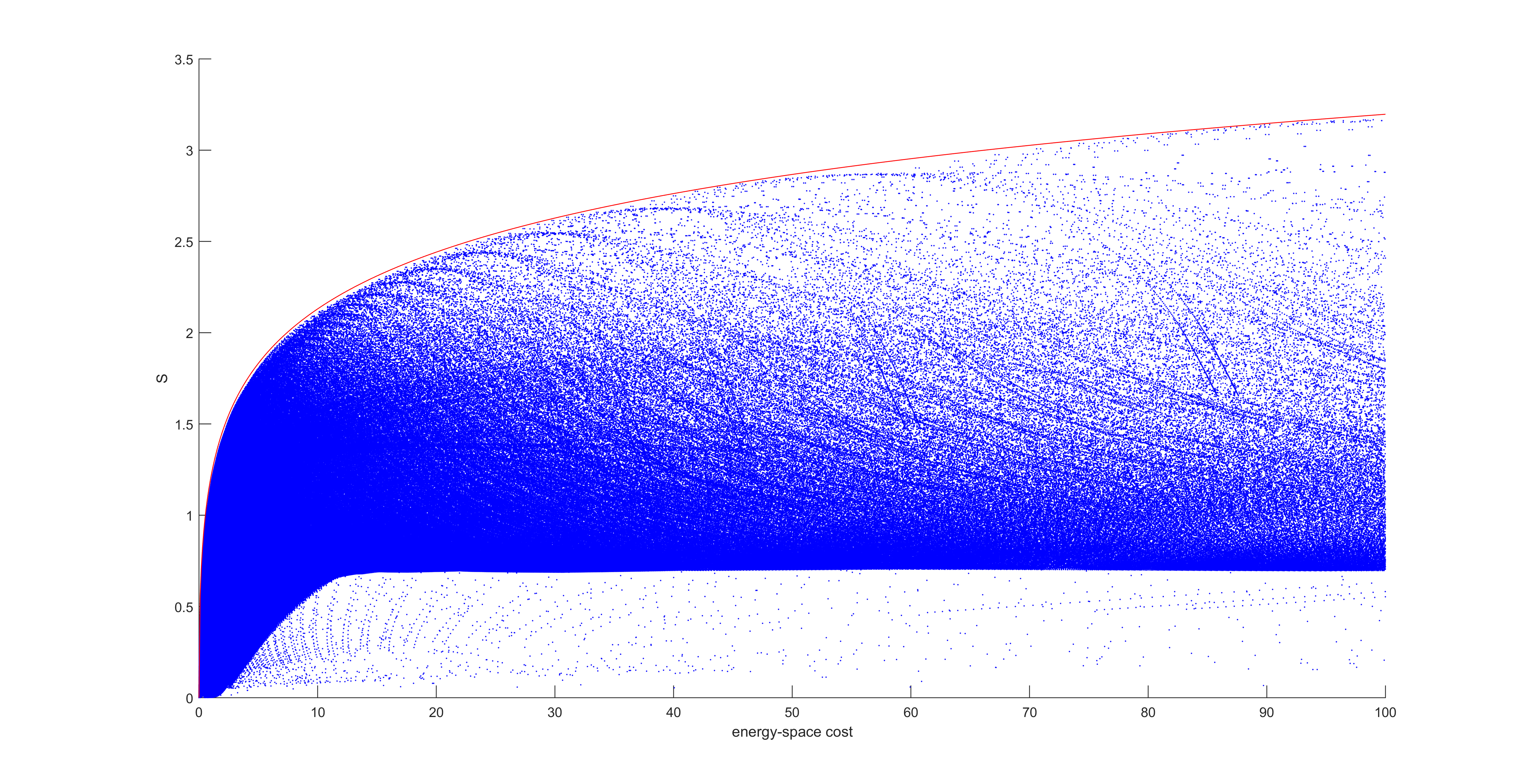}
    \caption{Plot of our selection of states and Hamiltonians (blue dots) as discribed in section \ref{ss_numres}. The red line is a potential bound function \eqref{eq_nbound} with optimal coefficient.}\label{fig}
\end{figure}
\section{Analytic Approach with the Relative Entropy}
In this section we present an analytic approach to find a lower bound for the space-energy costs in finite dimension.
We will also illustrate why it didn't lead to satisfying results.

For two density matrices $\rho,\sigma\in\mathcal{S(H)}$ the quantum relative entropy of $\rho$ with respect to $\sigma$ is defined by
\begin{equation*}
\S(\rho\|\sigma)=-\tr(\rho\log(\sigma))-\S(\rho).
\end{equation*}
We assume by convention $-s\cdot\log(0)=\infty$ for any $s>0$.
This implies $\S(\rho\|\sigma)=\infty$ in case of $\supp(\rho)\cap\ker(\sigma)\neq\{0\}$. One can show that the relative entropy is non-negative at all times. It vanishes if and only if $\rho=\sigma$.

Now we can derive a neat expression for the energy and spacial cost:
\begin{align*}
\S\Big(\rho\|\frac{e^{-H}}{\tr(e^{-H})}\Big)
&=
-\tr\Big(\rho\log\Big(\frac{e^{-H}}{\tr(e^{-H})}\Big)\Big) -\S(\rho)
\\
&=-\tr(\rho\log(e^{-H}))+\tr(\rho\log(\tr(e^{-H})))-\S(\rho)
\\
&=\tr(\rho H)+\log(\tr(e^{-H}))\tr(\rho)-\S(\rho)
\\
&=\tr(\rho H)+\log(\tr(e^{-H}))-\S(\rho),
\end{align*}
thus
\begin{equation*}
\tr(\rho H)=\S(\rho)-\log(\tr(e^{-H}))+\S\Big(\rho\|\frac{e^{-H}}{\tr(e^{-H})}\Big).
\end{equation*}

We apply this to the energy-surface costs:
\begin{align*}
\tr(\rho H)\tr(\rho Q^2)
&=
\Big(\S(\rho)-\log(\tr(e^{-H}))+\S\Big(\rho\|\frac{e^{-H}}{\tr(e^{-H})}\Big)\Big)
\\
&\qquad\cdot\Big(\S(\rho)-\log(\tr(e^{-Q^2}))+\S\Big(\rho\|\frac{e^{-Q^2}}{\tr(e^{-Q^2})}\Big)\Big)
\\
&=\S(\rho)^2-\S(\rho)\big(\log(\tr(e^{-H}))
\\
&\qquad+\log(\tr(e^{-Q^2}))\big)+\log(\tr(e^{-H}))\log(\tr(e^{-Q^2}))
\\
&\qquad+ \S\Big(\rho\|\frac{e^{-H}}{\tr(e^{-H})}\Big)
\big(\S(\rho)-\log(\tr(e^{-Q^2}))\big)
\\
&\qquad+ \S\Big(\rho\|\frac{e^{-Q^2}}{\tr(e^{-Q^2})}\Big)
\big(\S(\rho)-\log(\tr(e^{-H}))\big)
\\
&\qquad+ \S\Big(\rho\|\frac{e^{-H}}{\tr(e^{-H})}\Big)\S\Big(\rho\|\frac{e^{-Q^2}}{\tr(e^{-Q^2})}\Big)
\\
&\geq\S(\rho)^2-\S(\rho)\big(\log(\tr(e^{-H}))+\log(\tr(e^{-Q^2}))\big)
\\
&\qquad+\log(\tr(e^{-H}))\log(\tr(e^{-Q^2})).
\end{align*}
Here we can use Jensen's inequality:
\begin{align}\label{eq_deadend}
\tr(\rho H)\tr(\rho Q^2)
&\geq
\S(\rho)^2-\S(\rho)\big(\log(\tr(e^{-H}))+\log(\tr(e^{-Q^2}))\big)
\\
&\qquad+\big(\log(d)-\frac{1}{d}\tr(H))(\log(d)-\frac{1}{d}\tr(Q^2)\big).
\end{align}
The sum of the first two terms $\S(\rho)^2-\S(\rho)(\log(\tr(e^{-H}))+\log(\tr(e^{-Q^2}))$ is negative for big $d$, since $\log(\tr(e^{-Q^2}))$ is of order $d^2$ and $\log(\tr(e^{-H})>1$ because of the normalization of $H$.
So the sign of the right hand side depends on the eigenvalues of $H$.
Although this inequality might be non-trivial, we found no way to express the right hand side in meaningful quantities.

Also we can not expect the right hand side in \eqref{eq_deadend} to become solely dependent on $\S(\rho)$ since we haven't used $H=\frac12 P^2+V(x)$.
Because of that, we can scale the Hamiltonian on the left hand side via $\tilde{H}=\alpha H$ for small $\alpha$.
So the left hand side can become any non-negative number, without changing the state $\rho$, thus no non-trivial inequality can result directly from this.
\section{Conclusion}
In this thesis we discussed bounds of the type $\tr(\rho H)\tr(\rho Q^2)\geq f(\S(\rho))$.
First we examined the bound given in \cite{dam-nguyen}.
This bound does not hold, as we have shown with our first counterexample.
It shows an error by a factor $1/2$ in the limit for $d\to\infty$.
But since the Hamiltonian, as well as the state are most likely not optimal, the error must be assumed to be larger.
Here, $d$ is the number of degrees of freedom.

In contrast to that, the numerical results obtained in the setting of $\Cn$ suggest a bound
\begin{equation*}
\tr\left(\rho H\right)  \tr(\rho Q^2) \geq \frac{\h^2d^2}{2m}(\alpha\exp(\S(\rho)/d)-1)^2,
\end{equation*}

with a factor $\alpha\leq0.4$.
Although we did not transfer our results to the infinite-dimensional case via the thermodynamical limit, they strongly indicate the existence of such a bound for the finite-dimensional case.

As we have seen in the second example (theorem \ref{thm_ex_2}), the energy-surface bound does not hold for small $d$.
The result suggests that no similar bound holds in this scale.

\appendix
\counterwithin{theorem}{section}
\section{Appendix}
Let $\mathcal{H}$ be a finite or infinite-dimensional Hilbert space. 
\begin{lemma}\label{lem_matr_ps}
Let $f:[a,b]\to\R$, $a,b\in\R$, be an analytic function with power series coefficients $(a_k)_{k\in\N}$ around $x_0\in\R$ and convergence radius $r>0$, i.e.
\[
f(x)=\sum_{k=0}^\infty a_k (x-x_0)^k.
\]
We assume that $(a,b)\subseteq B_r(x_0)$ and that $f$ convergences absolutely on the convergence radius.
Then the function
\begin{align*}
f:[a,b]_\mathcal{H}&\to \mathcal{B}(\mathcal{H}),\\
\rho&\mapsto\sum_{k=0}^\infty a_k (\rho-x_0\1)^k
\end{align*}
is well defined and the sum converges in operator norm.
We have for the spectrum
$
\sigma(f(\rho))=f(\sigma(\rho)),
$
and the result has real trace:
$
\tr(f(\rho))\in \R\cup\infty.
$

By the same definition $f$ is a well defined function $\{\rho\in\Cn|\VERT{\rho-x_0\1}\leq r\}\to\mathcal{B}(\mathcal{H})$.
\end{lemma}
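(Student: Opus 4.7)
The plan is to split the claim into four sub-claims and address them in order: (i) operator-norm convergence of the defining series, (ii) the spectral mapping identity $\sigma(f(\rho))=f(\sigma(\rho))$, (iii) reality of the trace, and (iv) the extension to general matrices with $\VERT{\rho-x_0\1}\leq r$.

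First I would bound the shifted operator. For $\rho\in[a,b]_\mathcal{H}$, the operator $\rho-x_0\1$ is self-adjoint with spectrum in $[a-x_0,b-x_0]$, so the hypothesis $(a,b)\subseteq B_r(x_0)$ yields $\VERT{\rho-x_0\1}\leq\max(|a-x_0|,|b-x_0|)\leq r$. Submultiplicativity of $\VERT{\cdot}$ gives $\VERT{a_k(\rho-x_0\1)^k}\leq|a_k|\,r^k$, and since $\sum_k|a_k|\,r^k<\infty$ by the assumed absolute convergence on the boundary, the partial sums form a Cauchy sequence in the Banach space $\mathcal{B}(\mathcal{H})$ and converge to a bounded operator, which we take as the definition of $f(\rho)$.

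Next, for the spectral mapping, I would invoke the spectral theorem to diagonalize the self-adjoint operator $\rho$, writing $\rho=\int\mu\,dE(\mu)$ with a projection-valued measure $E$ supported on $\sigma(\rho)\subseteq[a,b]$. Each truncation $S_N(x):=\sum_{k=0}^N a_k(x-x_0)^k$ satisfies $S_N(\rho)=\int S_N(\mu)\,dE(\mu)$ by the polynomial functional calculus, and $S_N\to f$ uniformly on $[a,b]$ by the Weierstrass $M$-test applied to $|a_k|\,r^k$. Continuity of the continuous functional calculus in the uniform norm then gives $f(\rho)=\int f(\mu)\,dE(\mu)$, and the spectral mapping theorem delivers $\sigma(f(\rho))=f(\sigma(\rho))$. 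From this, reality of the trace is immediate: $f$ is real-valued on $[a,b]$, so $f(\rho)$ is self-adjoint; in finite dimensions $\tr f(\rho)=\sum_i f(\mu_i)\in\R$ is a finite real sum, and in infinite dimensions the sum (interpreted via the spectral measure) is either a convergent real number or $+\infty$, accounting for the intended application to $f(x)=-x\log x\geq0$.

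For the final assertion, I would observe that matrices $\rho\in\Cn$ with $\VERT{\rho-x_0\1}\leq r$ need not be self-adjoint, but the estimate $\VERT{a_k(\rho-x_0\1)^k}\leq|a_k|\,r^k$ uses only submultiplicativity and the norm bound, so the Cauchy argument from step (i) still produces a well-defined $f(\rho)\in\Cn$ without any modification. The main obstacle I expect is making the exchange of the infinite sum and the spectral integral fully rigorous in the infinite-dimensional case: the polynomial identity $S_N(\rho)=\int S_N(\mu)\,dE(\mu)$ is routine, but passing to the limit to obtain both $f(\rho)=\int f(\mu)\,dE(\mu)$ and the full equality (not just inclusion) $\sigma(f(\rho))=f(\sigma(\rho))$ requires carefully combining operator-norm convergence with continuity of the functional calculus and compactness of $\sigma(\rho)$.
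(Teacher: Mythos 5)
Your proof is correct, and it reaches the conclusion by a genuinely different (and somewhat more robust) route than the paper. The paper's own proof writes $\rho=\sum_{i}\lambda_i\ketbra{\psi_i}$ in an orthonormal eigenbasis, expands each power $(\rho-x_0\1)^k=\sum_i(\lambda_i-x_0)^k\ketbra{\psi_i}$, and concludes $f(\rho)=\sum_i f(\lambda_i)\ketbra{\psi_i}$ directly, reading off convergence, the spectrum, and the reality of the trace from this explicit diagonal form; the final assertion about $\{\rho\in\Cn:\VERT{\rho-x_0\1}\leq r\}$ is handled there by the same submultiplicativity estimate $\sum_k|a_k|\VERT{\rho-x_0\1}^k\leq\sum_k|a_k|r^k<\infty$ that you use. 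You instead prove operator-norm convergence first as a Cauchy argument in the Banach space $\mathcal{B}(\mathcal{H})$, and then obtain the spectral mapping by passing from the polynomial functional calculus through uniform convergence of the partial sums on $[a,b]$ to the continuous functional calculus. What your approach buys is rigor in the infinite-dimensional self-adjoint case: a general $\rho\in[a,b]_\mathcal{H}$ on an infinite-dimensional $\mathcal{H}$ need not admit a discrete orthonormal eigenbasis (it may have continuous spectrum), so the paper's expansion is strictly valid only for operators that are diagonalizable in an orthonormal basis, e.g.\ compact ones such as density matrices — which suffices for the applications in the paper but is not what the lemma literally asserts. What the paper's computation buys is the explicit formula $f(\rho)=\sum_i f(\lambda_i)\ketbra{\psi_i}$, from which the trace statement is immediate; your remark that the trace claim needs the positivity of the intended $f(x)=-x\log x$ to rule out an $\infty-\infty$ ambiguity is a fair caveat that the paper glosses over as well.
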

\begin{proof}
Let $\rho\in[a,b]_\mathcal{H}$. Then we have a spectral decomposition
\[
\rho=\sum_{i=1}^{\dim(\mathcal{H})}\lambda_i\ketbra{\psi_i},
\]
with $\lambda_i\in[a,b]$ and $(\ket{\psi_i})_i$ an orthonormal basis.
Now we see that $f(\rho)$ is just the application of $f$ on the eigenvalues of $\rho$:
\begin{align*}
\sum_{k=0}^\infty a_k (\rho-x_0\1)^k&=
\sum_{k=0}^\infty a_k \Big(\sum_{i=1}^{\dim(\mathcal{H})}\lambda_i\ketbra{\psi_i}-\sum_{i=1}^{\dim(\mathcal{H})}x_0\ketbra{\psi_i}\Big)^k
\\
&=\sum_{k=0}^\infty a_k\Big(\sum_{i=1}^{\dim(\mathcal{H})}(\lambda_i-x_0)\ketbra{\psi_i}\Big)^k
\\
&=\sum_{k=0}^\infty a_k\sum_{i=1}^{\dim(\mathcal{H})}(\lambda_i-x_0)^k\ketbra{\psi_i}
\\
&=\sum_{i=1}^{\dim(\mathcal{H})}f(\lambda_i)\ketbra{\psi_i},
\end{align*}
where the convergence of the i-sum is with respect to the operator norm and we used that the power series converges absolutely.
The last sum converges, since $\lambda_i\in \overline{B_{r}(x_0)}$ and $f(\overline{B_{r}(x_0)})$ is compact.
Thus $(f(\lambda_i))_i$ is bounded.
The statements about the spectrum and the trace follow immediately from the last equation.

For the definition on $\{\rho\in\Cn|\VERT{\rho-x_1\1}\leq r\}$ we have:
\begin{align*}
\left\|\sum_{k=0}^\infty a_k (\rho-x_0\1)^k\right\|
&\leq \sum_{k=0}^\infty |a_k| \VERT{\rho-x_0\1}^k
\\
&\leq \sum_{k=0}^\infty |a_k| r^k
\\
&<\infty.
\end{align*}
So $f$ is well defined on this set, too.
\end{proof}
\begin{lemma}[Von Neumann Entropy]\label{vnEntropy}
The Von Neumann Entropy
\[
S:\mathcal{S(H)}\to\overline{\R^+_0}, \rho\mapsto-\tr(\rho\log(\rho))
\]
is well defined with $\overline{\R^+_0}=\R^+_0\cup\infty$.
If $\mathcal{H}$ is finite dimensional, the same definition holds for $[a,b]_\mathcal{H}$ instead of $\mathcal{S(H)}$ and the result is always finite.
\end{lemma}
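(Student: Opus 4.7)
The plan is to pass from the operator expression $-\rho\log(\rho)$ to the scalar series $-\sum_i\lambda_i\log(\lambda_i)$ via the spectral theorem, and then to treat the finite- and infinite-dimensional regimes separately. A preliminary remark: the function $f(x):=-x\log(x)$ is not analytic at $x=0$, so Lemma \ref{lem_matr_ps} does not apply verbatim; however, $f$ extends continuously to $[0,1]$ by the convention $f(0):=0$, and it is non-negative on $[0,1]$, which is all that is required here.

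Any $\rho\in\mathcal{S(H)}$ is positive and trace-class, hence compact and self-adjoint, and therefore admits a spectral decomposition $\rho=\sum_i\lambda_i\ketbra{\psi_i}$ with $\lambda_i\in[0,1]$, $\sum_i\lambda_i=1$, and $(\ket{\psi_i})_i$ an orthonormal system extending to a basis. By continuous functional calculus I would set $f(\rho):=\sum_if(\lambda_i)\ketbra{\psi_i}$, a positive operator. For the finite-dimensional statement on $[a,b]_\mathcal{H}$, the same decomposition is a finite sum with $\lambda_i\in[a,b]\subseteq[0,\infty)$; each $f(\lambda_i)$ is a finite real number, so $S(\rho)=\sum_if(\lambda_i)$ is finite, settling the second half of the lemma.

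For general $\rho\in\mathcal{S(H)}$ I would then define $S(\rho):=\sum_if(\lambda_i)\in[0,\infty]$. Since every summand is non-negative, this unordered sum is unconditionally well-defined in $\overline{\R^+_0}$, which simultaneously establishes well-definedness and non-negativity. To reconcile this with the stated formula $-\tr(\rho\log\rho)$, I would evaluate the trace of the positive operator $f(\rho)$ in its own eigenbasis:
\[
\tr(f(\rho))=\sum_i\braket{\psi_i|f(\rho)|\psi_i}=\sum_if(\lambda_i),
\]
and invoke the basis independence of the trace of a positive operator (understood as a value in $[0,\infty]$).

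The main technical subtlety lies precisely in the infinite-dimensional case when $S(\rho)=\infty$: the operator $-\rho\log(\rho)$ need not be trace-class, so the symbol $\tr(\rho\log(\rho))$ must be interpreted carefully. The resolution is the standard fact that for any positive operator $A$ on a separable Hilbert space, the quantity $\sum_j\braket{e_j|A|e_j}\in[0,\infty]$ is independent of the orthonormal basis $(e_j)$; evaluating it in the eigenbasis of $\rho$ (which diagonalises $f(\rho)$ as well) yields $\sum_if(\lambda_i)$ directly, and the convention $f(0)=0$ ensures that eigenvalue zero contributes nothing.
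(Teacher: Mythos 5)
Your proof is correct, but it takes a genuinely different route from the paper. The paper never invokes the continuous functional calculus: it defines $f(x)=x\log(x)$ through its power series around $x_0=1$, rearranges it to $y+\sum_{k\geq 2}(-1)^k y^k/(k(k-1))$ with $y=x-1$, observes that the coefficients are dominated by $1/k^2$ so the series converges \emph{absolutely on the boundary} of the disc of convergence, and then applies Lemma \ref{lem_matr_ps} to get a norm-convergent operator series on $[0,1]_\mathcal{H}$; non-negativity of $\S$ then follows from $f\leq 0$ on $[0,1]$. In particular, your preliminary remark that Lemma \ref{lem_matr_ps} ``does not apply'' because $-x\log x$ is not analytic at $0$ slightly misreads that lemma: its hypothesis of absolute convergence on the convergence radius is there precisely so that the boundary point $x=0$ of $B_1(1)$ is covered, and the series does converge there (to $0$, recovering your convention $f(0)=0$). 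Your approach instead defines $f(\rho)$ directly via the spectral decomposition of the compact positive operator $\rho$ and handles the possibly infinite trace by the basis-independence of $\sum_j\braket{e_j|A|e_j}\in[0,\infty]$ for positive $A$; this is cleaner and more self-contained (it avoids the series rearrangement entirely and makes the $\lambda_i=0$ case explicit), at the cost of importing the functional calculus for compact self-adjoint operators rather than the paper's elementary power-series machinery. Both arguments reduce to the same scalar sum $\sum_i f(\lambda_i)$ with non-negative terms, and both leave the same small imprecision in the finite-dimensional clause: the claim for general $[a,b]_\mathcal{H}$ really requires $a\geq 0$ (and the paper's own proof only treats $[0,1]_\mathcal{H}$), which you tacitly assume as well.
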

\begin{proof}
The logarithm on the real line is defined by the power series around $1$ with convergence radius $r=1$
\[
\log(x)=\sum_{k=1}^\infty\frac{(-1)^{k+1}}{k}(x-1)^k.
\]
$f:(0,2)\to\R,x\mapsto x\log(x)$ has then the power series 
\[
x\log(x)=x\sum_{k=1}^\infty\frac{(-1)^{k+1}}{k}(x-1)^k,
\]
with convergence radius $r=1$ around 1.
It can be rewritten using the transformation $x=y+1$:
\begin{align*}
x\log(x)
&=
(y+1)\sum_{k=1}^\infty\frac{(-1)^{k+1}}{k}(y)^k
\\
&=\sum_{k=1}^\infty\frac{(-1)^{k+1}}{k}y^{k+1}
+\sum_{k=1}^\infty\frac{(-1)^{k+1}}{k}y^k
\\
&=\sum_{k=2}^\infty\frac{(-1)^{k}}{k-1}y^{k}
+\sum_{k=1}^\infty\frac{(-1)^{k+1}}{k}y^k
\\
&=y+\sum_{k=2}^\infty(-1)^k y^{k} \Big(\frac{1}{k-1}-\frac{1}{k}\Big)
\\&=
y+\sum_{k=2}^\infty(-1)^k y^{k}\frac{1}{k(k-1)}.
\end{align*}
This series converges absolutely on the convergence radius $r=1$ since the it can be upper bounded by series $a_n=1/n^2$.
By lemma \ref{lem_matr_ps} we now have that $f$ is well defined on $[0,1]_\mathcal{H}$ and also on $\mathcal{S(H)}$.

We also have $f\leq0$ on $[0,1]$, so for the spectrum with respect to a state $\rho$ we get $\sigma(f(\rho))\leq0$ and  $\S(\rho)=-\tr(f(\rho))\geq0$ for all $\rho\in\mathcal{S(H)}$. This proves the remarks.
\end{proof}
\newpage
\bibliographystyle{alpha}
\bibliography{./tex/masterarb}
\end{document}